\newtheorem{condition}{Condition}
\newtheorem{lemma}{Lemma}
\newtheorem{theorem}{Theorem}
\newtheorem{corollary}{Corollary}
\newtheorem*{assumption*}{Assumption}
\newcommand{\com}[1]{\textbf{\color{red} (Comment: #1) }}
\newcommand{\comg}[1]{\textbf{\color{blue} (COMMENT: #1)}}
\newcommand{\response}[1]{\textbf{\color{blue} (RESPONSE: #1)}}
\newcommand{\com}[1]{}
\newcommand{\comg}[1]{}
\newcommand{\response}[1]{}
\begin{document}

\title{Online Bitrate Selection for Viewport Adaptive 360-Degree Video Streaming}

\author{Ming Tang
	and~Vincent~W.S.~Wong
	\thanks{Ming Tang and Vincent W.S. Wong are with the Department
		of Electrical and Computer Engineering, The University of British Columbia, Vancouver, Canada.\protect\\
		E-mail: \{mingt,~vincentw\}@ece.ubc.ca}}

\IEEEtitleabstractindextext{%
	\begin{abstract}
	360-degree video streaming provides users with immersive experience by letting users determine their field-of-views (FoVs) in real time. To enhance the users' quality of experience (QoE) given their limited bandwidth, recent works have proposed a viewport adaptive 360-degree video streaming model by exploiting the bitrate adaptation in spatial and temporal domains. Under this video streaming model, in this paper, we consider a scenario with a newly generated 360-degree video without viewing history from other users. To maximize the user's QoE, we propose an online bitrate selection algorithm, called OBS360. The proposed online algorithm can adapt to the unknown and heterogeneous users' FoVs and downloading capacities. We prove that the proposed algorithm achieves sublinear dynamic regret under a convex decision set. This suggests that as the number of video segments increases, the performance of the online algorithm approaches the performance of the offline algorithm, where the users' FoVs and downloading capacities are known. We perform simulations with real-world dataset to evaluate the performance of the proposed algorithm. Results show that compared with several existing methods, our proposed algorithm can enhance the users' QoE significantly by improving the viewing bitrate and reducing the inter-segment and intra-segment degradation losses of the users.
\end{abstract}

\begin{IEEEkeywords} 
Adaptive 360-degree video streaming, virtual reality, quality of experience (QoE), online convex optimization, online gradient descent.
\end{IEEEkeywords}}
          
\maketitle

\IEEEdisplaynontitleabstractindextext

\IEEEpeerreviewmaketitle

\IEEEraisesectionheading{\section{Introduction}}

\subsection{Background and Motivation}
With the development of virtual reality (VR) technologies, 360-degree video streaming is becoming increasingly popular. With such 360-degree videos, users can determine their field of views (FoVs) in real time by controlling the direction of the streaming devices, with which users can have immersive video streaming experience. Currently, there are various 360-degree video content providers (e.g., YouTube) and VR devices supporting 360-degree videos (e.g.,  Sony PlayStation VR\cite{Sony}). An example of a 360-degree video streaming in a wireless network is shown in Fig. \ref{fig:model0}.  The  users watching the same 360-degree video may have heterogeneous FoVs (represented by the solid rectangles).

The 360-degree real-time interaction, however, is at the expense of additional bandwidth consumption. This is because all the 360-degree scenes, including the scenes that are being viewed or not being viewed by the users, have to be downloaded in real time in response to the users' interactions. This additional bandwidth consumption imposes the requirement of efficient allocation of radio resources in wireless networks, so as to provide good quality of experience (QoE) video streaming services.  



\begin{figure}[t]
	\centering
	\includegraphics[height=2.6cm]{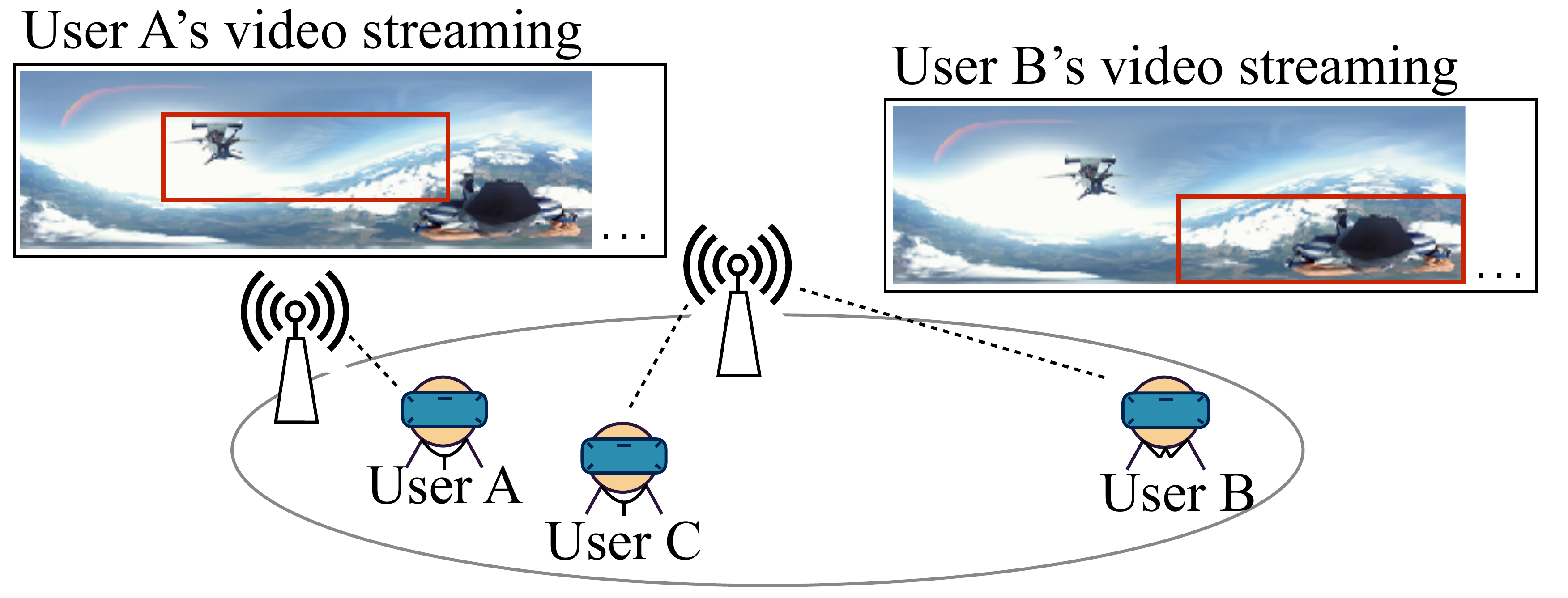}
	\caption{{360-degree video streaming in a wireless network. Users A and  B watch the same video, but they have heterogeneous FoVs. 
		}}\label{fig:model0}
	\end{figure}
	
	\begin{figure}
		\centering
		\includegraphics[height=2.7cm]{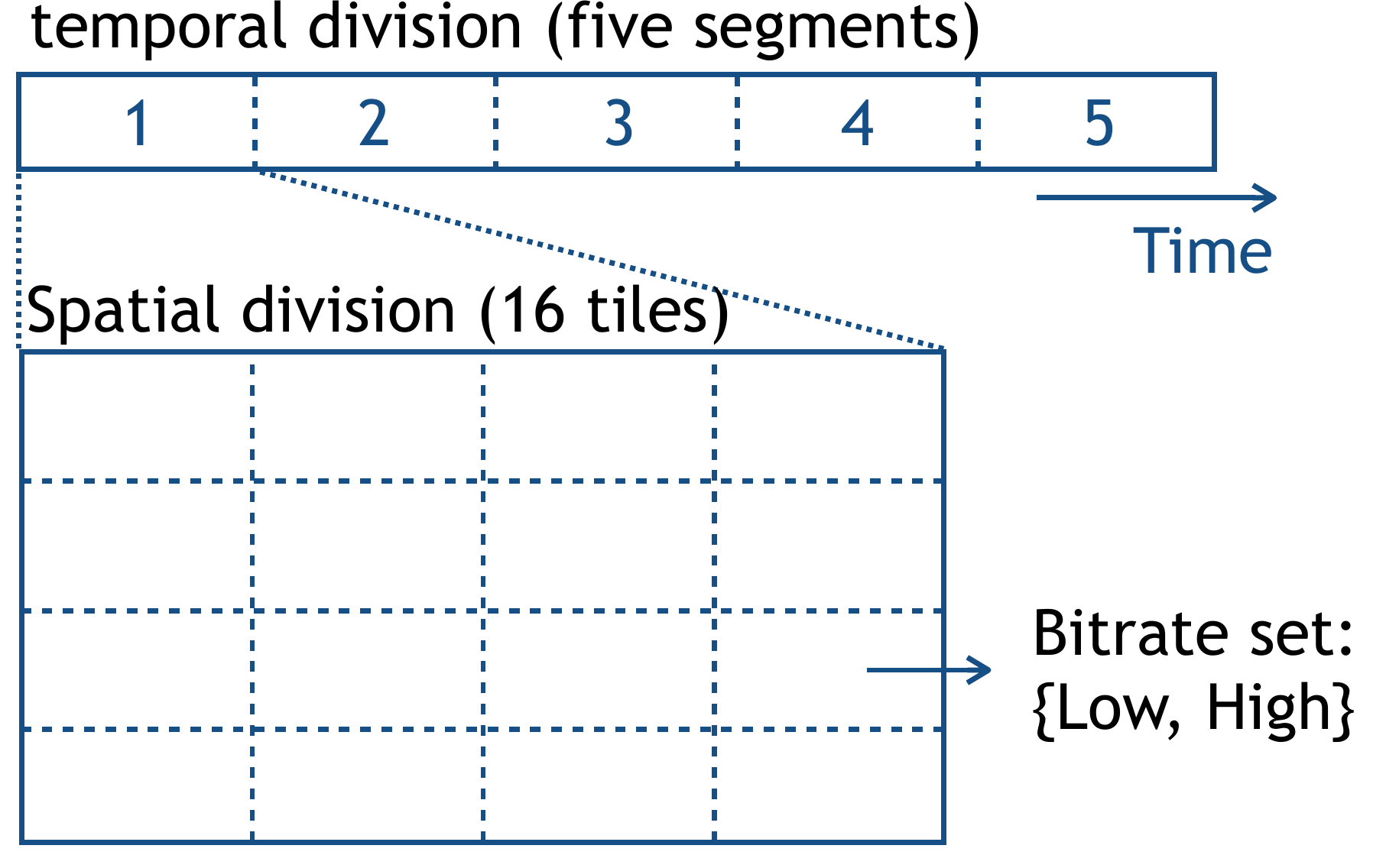}~~\includegraphics[height=2.15cm]{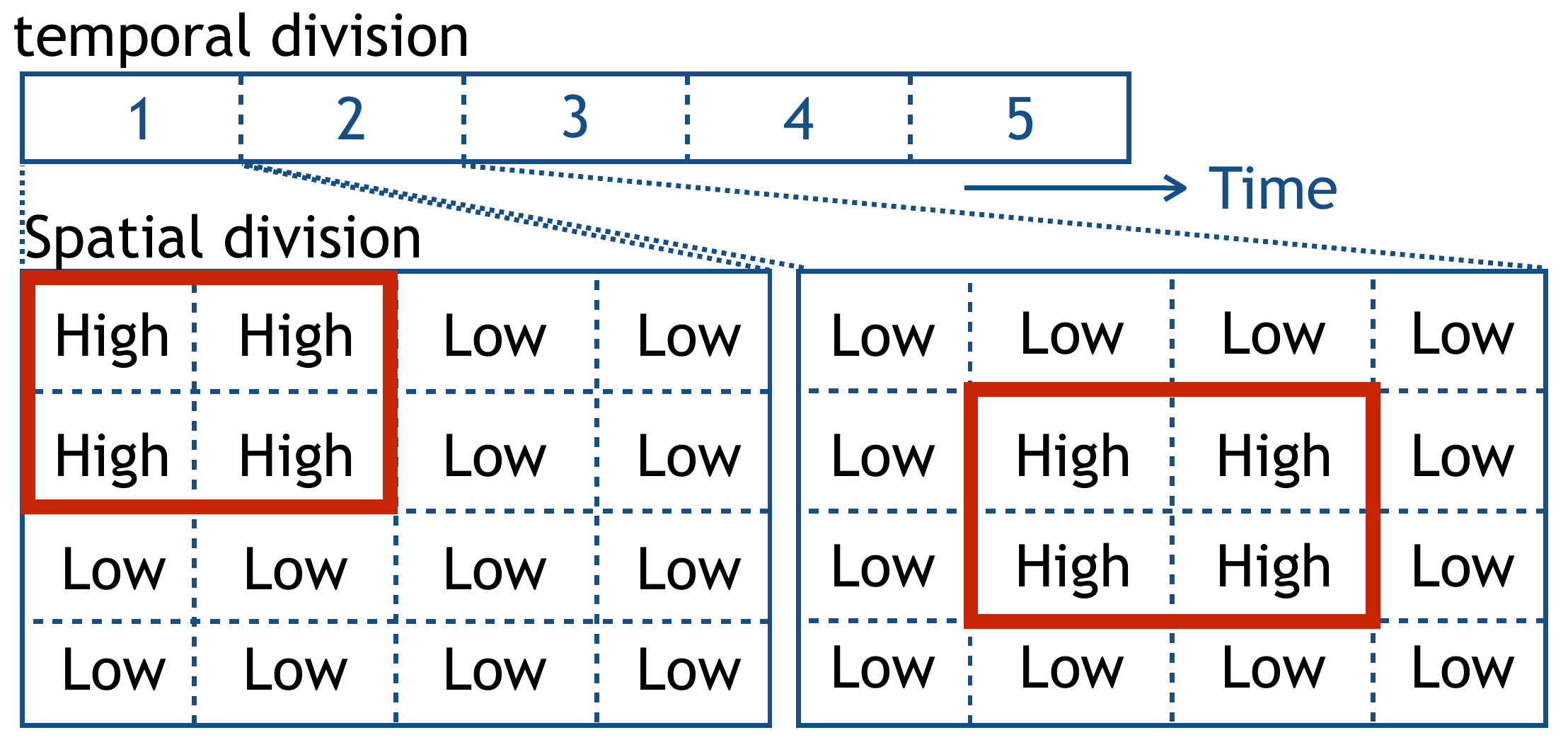}\\
		\small{~~(a)\qquad\qquad\qquad\qquad\qquad\qquad\qquad\qquad(b)\qquad}
		\caption{{Viewport adaptive 360-degree video streaming: (a) temporal and spatial divisions in video encoding; (b) bitrate selection in video streaming.}}\label{fig:model}
	\end{figure}
	To improve the user's QoE, a promising approach is to use viewport adaptive 360-degree (VA360) video streaming \cite{Hosseini2016,DAcunto2016,ballard2019rats}, which exploits bitrate adaptation in both spatial and temporal domains. Specifically, in the encoding process, an entire video is divided into multiple segments, each of which corresponds to a video segment within a certain playback time period. Each segment is further spatially divided into multiple tiles. Each tile corresponds to the video of a viewing area during the segment playback time period. Each tile is encoded at multiple bitrates. During video streaming, video players can select the bitrate of each tile, adapting to human behaviors and real-time network connections. For example, in Fig. \ref{fig:model} (a), in the encoding process, each video is temporally divided into five segments. Each  segment is  spatially divided into $4\times4=16$ tiles, each of which is encoded at two bitrate levels \{Low, High\}. In Fig. \ref{fig:model} (b), when streaming the video,  the video player can select the bitrate of each tile to adapt to the variation of the user's FoV (represented by the solid rectangles) so as to improve the user's QoE and reduce the  required downloading bandwidth. For example, the video player can select high bitrates for the tiles in the FoV and low bitrates for the others. 

	
	In this work, under the VA360 video streaming model, we focus on a video streaming scenario with a newly generated 360-degree video, where no historical viewing information from other users is available for predicting the viewing behaviors of a user. Designing a  bitrate selection algorithm for this scenario is challenging. First, the user's FoV for viewing a segment is unknown when the video player decides the bitrates of the segment, and the FoV may vary across segments. 
	Second,  the actual capacity for downloading a segment is unknown beforehand and may vary across time. 
	These unknown and varying user's FoV and downloading capacity require the bitrate selection algorithm to learn each user's features in an online fashion and adapt the bitrates accordingly.

	\subsection{Related Work}\label{sec:literature}

	Existing works have studied the bitrate selection algorithm design for VA360 video streaming. 
	Using the statistics of other users' viewing history, 
	Xiao \emph{et al.} in \cite{Xiao2018-BAS360} proposed an online bitrate selection algorithm based on the available bandwidth and the probability that a tile is being viewed. Qian \emph{et al.} in \cite{Qian2018-Flare} proposed an algorithm based on the classification of the tiles of each segment using the viewing history. 
	In \cite{zhou2018clustile}, Zhou \emph{et al.} proposed an algorithm to minimize the bandwidth utilization according to the user's expected FoV. Jiang \emph{et al.} in \cite{jiang2019hierarchical} proposed a two-layer bitrate selection algorithm based on both FoV prediction and buffer management.  In \cite{xie2018cls}, Xie \emph{et al.} designed a bitrate selection algorithm using machine learning. In \cite{sun2019two}, Sun \emph{et al.} proposed a two-tier system that selects high quality bitrates for tiles within the predicted FoV and low quality bitrates for the others. These works considered the bitrate selection according to the other users' viewing history, which may not be suitable for those newly generated videos. In \cite{yuan2019spatial}, Yuan \emph{et al.} proposed to use a Gaussian model to predict the user's FoV without using the other users' viewing history. This method, however, requires the video player to define the potential FoV patterns of the user beforehand. Le Feuvre \emph{et al.} in \cite{LeFeuvre2016-Tiled} proposed an algorithm that identifies the objects in each tile and decides the bitrates based on the identifying results. However, this algorithm may not be able to characterize the user heterogeneities. In \cite{shi2019mobile}, Shi \emph{et al.} proposed a head movement-based approach, while its performance depends on the accuracy of  movement prediction. Nguyen in \cite{Nguyen2019-an} proposed an algorithm that adjusts the FoV estimation error to capture the user's FoV preference. However, the proposed algorithm does not consider the bandwidth heterogeneity. Zhang \emph{et al.} in \cite{Zhang2018} proposed a deep reinforcement learning (DRL) algorithm to learn the user's FoV and bandwidth. Pang in \cite{pang2019towards} proposed a DRL algorithm to learn the user's FoV and head movements. The learning in DRL algorithms requires random exploration. For a newly generated video without offline training, the video player learns during the streaming process,  under which the randomly selected bitrates may lead to poor QoE. 
	Online convex optimization \cite{hazan2016introduction,shalev2012online} is a technique which can learn the users' features in an online fashion. 
	Typical methods include online gradient descent (OGD)  algorithms \cite{Mokhtari2016-online,Chen2017-optimization,paternain2017online}  and  online mirror descent (OMD)  algorithms \cite{Hall2015online,jadbabaie2015online}. 
	One of the common characteristics of these algorithms is that the decision of an object (e.g., a segment in our work) is made based on the decision of the previous object and the realization of the  parameters related to the previous object. 
	Such algorithms cannot be directly used in the VA360 video streaming model for the following reason. The realization of the parameters related to a segment  (e.g., the capacity for downloading the segment, the FoV for viewing the segment) is observed after the segment has been downloaded and viewed. When the video player needs to make the bitrate decision of a segment, however, the user may not have viewed the previous segment. 
	In this case,  the video player is unable to determine the bitrates of the segment by applying those existing algorithms  \cite{Mokhtari2016-online,Chen2017-optimization,paternain2017online,Hall2015online,jadbabaie2015online}. 

	

	\subsection{Solution Approach and Contributions}
	In this work, we propose  an online bitrate selection algorithm, called OBS360 algorithm. This  algorithm aims to optimize the user's QoE in VA360 video streaming. It  can  learn the user's FoV preference and time-varying downloading  capacity in real time, and can handle the uncertain FoV and  downloading capacity of the user in the future. 
	
	
	
	The idea of the OBS360 algorithm is inspired by the existing OGD algorithms  \cite{Mokhtari2016-online,Chen2017-optimization,paternain2017online}.  
	In contrast with the existing OGD algorithms, our proposed algorithm can handle the scenario in which the video player may not have observed the realization of the time-varying downloading capacity as well as the user's FoV of the previous segment when it needs to make the bitrate decision of a segment. In addition, under the special case that the realization of the parameters related to a decision can  always be observed before the next decision, the proposed algorithm is equivalent to the existing OGD algorithms.

	The main contributions of this work are as follows.
	\begin{itemize}
		\item \emph{Viewport-Adaptive 360-Degree Video Streaming:} We focus on a 360-degree streaming scenario with a newly generated video, where the video does  not have historical viewing information from other users. Under this scenario, we formulate a bitrate selection problem, which aims to optimize the  user's QoE. 
		\item \emph{Online Bitrate Selection Algorithm:} 
		We propose an OBS360 algorithm for online bitrate selection. The proposed algorithm can learn the user's FoV preference and downloading  capacity in real time. It can achieve sublinear dynamic regret under a convex decision set. This shows the intuition that as the number of segments increases, the performance of the online algorithm approaches the offline optimal performance where the user's FoV and downloading capacity are known beforehand.
		\item \emph{Performance Evaluation:} 
		Simulations with the real-world  datasets from \cite{van2016http} and \cite{Knorr2018data} show the following. The proposed OBS360 algorithm can significantly improve the user's QoE when compared with  BAS-360$^{\circ}$ in \cite{Xiao2018-BAS360} and Flare in \cite{Qian2018-Flare}. Specifically, our proposed algorithm can improve the user's viewing bitrate by $24.6\%-58.8\%$. In addition, it can reduce the inter-segment and intra-segment degradation losses by $83.3\%-91.0\%$ and $84.1\%-89.1\%$, respectively. 
	\end{itemize}
	
	The rest of this paper is organized as follows. We present the system model in Section \ref{sec:system-model}. In Section \ref{sec:online}, we propose the online bitrate selection algorithm. In Section \ref{sec:simulation}, we show the  performance evaluation. We conclude in Section \ref{sec:conclude}.
	

\section{System Model}\label{sec:system-model}
In this section, we first introduce the model setting, and then formulate the user's QoE maximization problem.

\subsection{Model Setting}
We focus on the bitrate adaptation of a user's 360-degree video streaming. The video and user models are as follows.

\subsubsection{Video Model}\label{subsec:video-model}
The video is temporally partitioned into segments (i.e., small video pieces), each corresponding to a playback time of $\beta$ seconds. Let  $\mathcal{I}=\{1,2,\ldots,I\}$ denote the set of segments. Each segment is further spatially divided into $K$ tiles with $M$ rows and $N$ columns, i.e., $K=MN$. Let $\mathcal{K}=\{1,2,\ldots,K\}$ denote the set of tiles of each segment.

We introduce a \emph{reference FoV} for each segment, where the idea is inspired by the recommended FoV in \cite{Knorr2018data}. In practice, the reference FoV can be either the FoV showing the main object (determined by object detection methods, e.g., \cite{LeFeuvre2016-Tiled}) of the video scene or the recommended FoV  marked by video producers (e.g., \cite{Knorr2018data}). The reference FoV is used to characterize user-specific \emph{FoV preference}, i.e., the relative position of a user's FoV to the reference FoV. Fig. \ref{fig:vr-example} shows an example of a 360-degree video streaming of car riding experience. The reference FoV can be looking ahead in the car (e.g., the four shaded tiles in the center) in Fig. \ref{fig:vr-example}, while a user may prefer to look toward his  right-hand side (e.g., the FoV represented by the solid rectangle) in Fig. \ref{fig:vr-example}. 

The reference FoV can be different in different segments. Hence, in order to characterize the user's FoV preference (i.e., the relative position of the user's FoV to the reference FoV), 
we define the indices of the tiles based on their relative positions to the reference FoV.\footnote{If using a fixed tile indexing (irrelevant to the reference FoV) as in most of the existing works in VA360 (e.g., \cite{Qian2018-Flare,Xiao2018-BAS360}), we cannot characterize user's FoV preference because the reference FoV can vary across segments. This is one approach of tile indexing, while it does not affect either the video encoding or the video streaming process.} 
The indices of the tiles are defined as follows. Suppose the top-left corner of the reference FoV  is at the tile on row $m_0$ and  column $n_0$. The tile on row $m$ and column   $n$ is indexed with $ N \text{mod}(m-m_0,M)  + (\text{mod}(n-n_0,N) + 1) $, where $\text{mod}(x,y)$ is equal to $x$ modulo $y$, and $M$ and $N$ is the total number of rows and columns of the tiles, respectively.
For example, in Fig. \ref{fig:index-example}, the reference FoV is represented by the shaded area, which can be different for different segments. In segment $1$, the top-left corner of the reference FoV is located at the tile on row  $m_0 = 2$ and  column $n_0=2$, so the  tile on row $m = 2$ and column $n=4$ is indexed by $4\text{mod}(2-2,4) + (\text{mod}(4-2,4) + 1) = 3$. Note that the 360-degree video has no boundary, e.g., in segment $1$, tile $4$ is adjacent to the right-hand side of tile $3$.  



\begin{figure}[t]
	\centering
	\includegraphics[height=2.5cm]{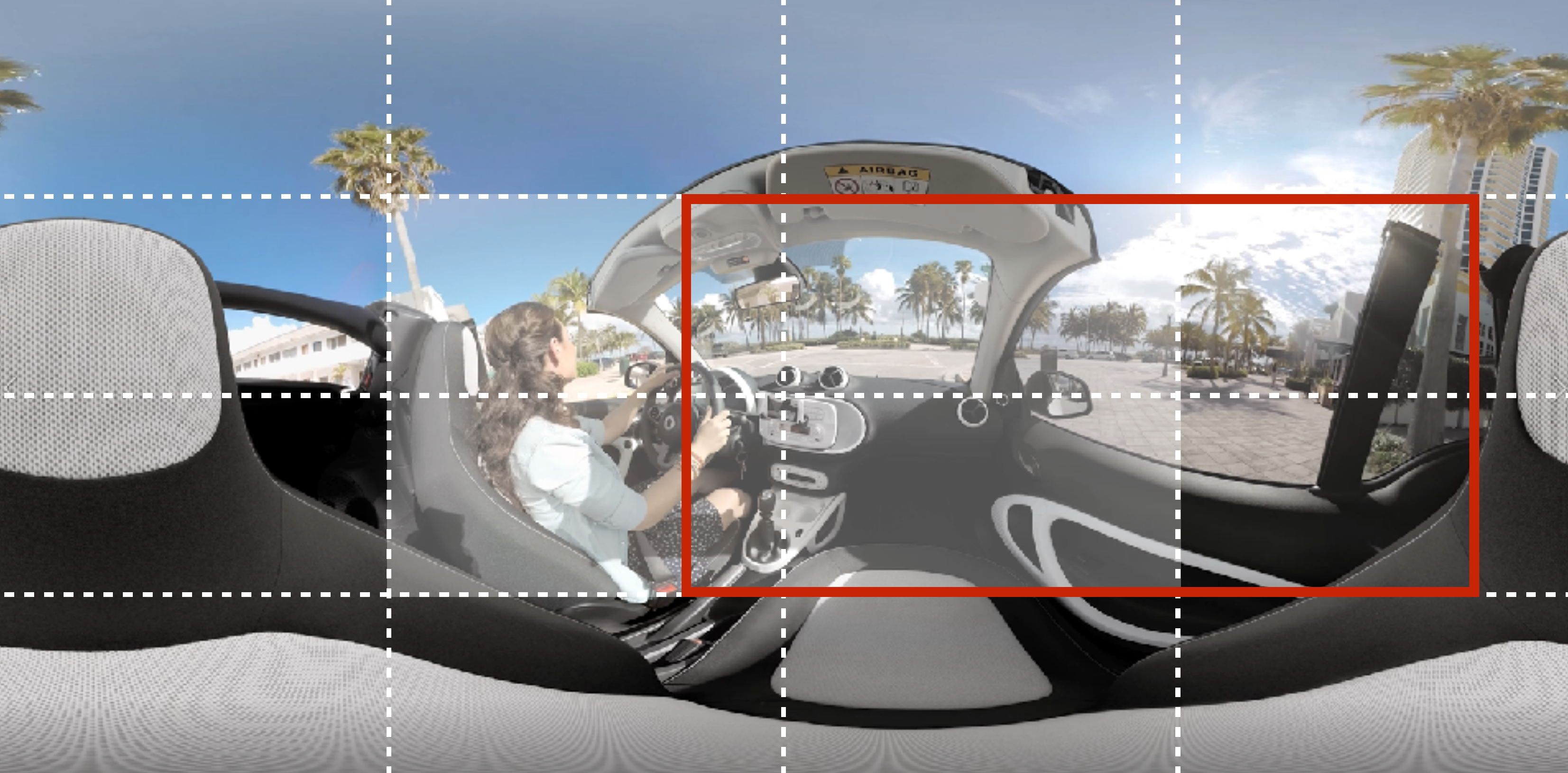}
	\caption{{An example with a 360-degree car riding video.}}\label{fig:vr-example}
\end{figure}

\begin{figure}[t]
	\centering
	\includegraphics[height=2.7cm]{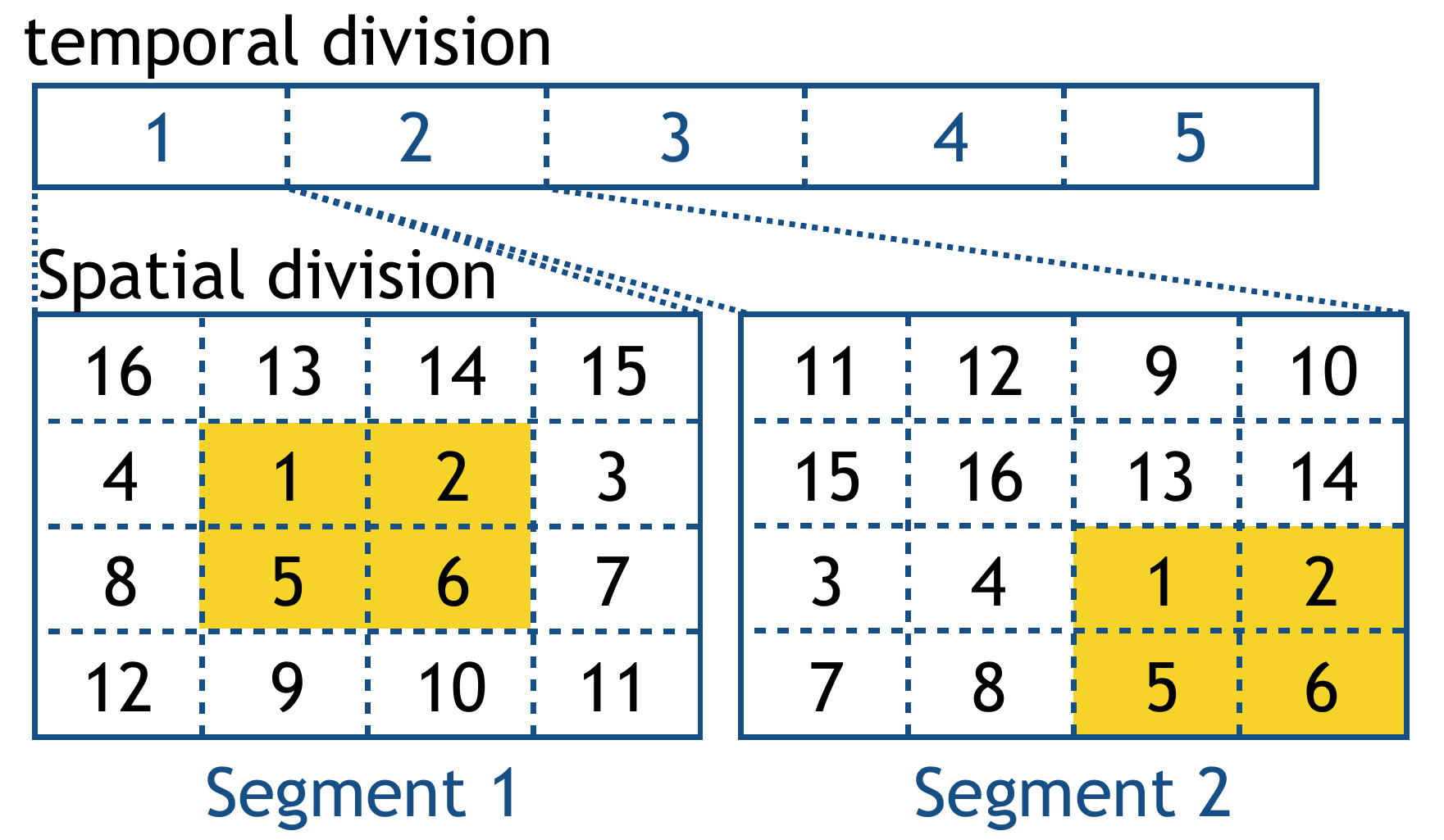}
	\caption{{An illustration of the reference tiles and tile indices.}}\label{fig:index-example}
\end{figure}

In the following, we use tile $(i,k)$ to denote  tile $k$ of segment $i$. 
Each tile is encoded at $|\mathcal{R}|$ bitrate levels, where each level is selected from bitrate set $\mathcal{R} = \{R_1,R_2,...,R_{|\mathcal{R}|}\}$. 
The value of the bitrate represents the number of bits required to encode a tile corresponding to a playback time period of one second. 
Hence, the size (i.e., the total number of bits) of a tile of a $\beta$-second segment selecting bitrate $r\in\mathcal{R}$ is $r \beta$. 
Without  loss of generality, we assume that $R_1< R_2< \cdots< R_{|\mathcal{R}|}$.

\subsubsection{User Model}\label{subsec:user-model}
During  video streaming, the user's downloading capacity and his FoV are time-varying. The user's downloading capacity varies across time. We consider a continuous time interval $\mathcal{T}=[0,T)$. Let $d(t)$ denote the user's downloading capacity at time $t\in\mathcal{T}$, i.e., the user's maximum achievable downloading rate at time $t$. We assume that $d(t)$ is upper- and lower-bounded, i.e., $d(t)\in[d^{\text{min}},d^{\text{max}}]$ for all $t\in\mathcal{T}$. On the other hand, the user's FoV varies across playback time (instead of the real time),\footnote{For example, a user starts playing the video at time $t=0$ sec, and the video rebuffers for one second during time interval $[1,2]$ (sec). Then, at real time $t=3$ sec, the playback time is at $3-(2-1)=2$ sec.} because the user changes his FoV to watch his interested contents in the video, which is playback time-associated. We assume that when the user watches a segment, the user's FoV is unchanged. This is reasonable because in practice, a segment always corresponds to a video piece of several seconds. 

\subsection{Problem Formulation}\label{subsec:problem-formulation}
We aim to optimize the bitrate selection of a user to maximize the user's QoE. 
In the following, we first introduce the decision variables. We then describe the constraints and the QoE. Finally, we formulate the QoE maximization problem.

\subsubsection{Decision Variables}
The user makes decisions on the bitrates of the tiles. Let $r_{i,k}\in \mathcal{R}$ denote the bitrate decision of  tile $(i,k)$. 
Let  $\tau_{i,k}\in\mathcal{T}$  and $\hat{\tau}_{i,k}\in\mathcal{T}$ denote the time that the downloading of tile $(i,k)$ is started and finished, respectively. 
The user downloads the video tiles in a particular sequence: tile $(i,k)$ is downloaded earlier than tile $(i',k')$ if and only if either (a) $i<i'$ or (b) $i=i'$ and $k<k'$. When  one tile has been downloaded, the downloading of the next tile will start immediately, i.e.,
\begin{equation}\label{eq:download}
	{\tau}_{i,k}=\left\{
	\begin{array}{ll}
		\hat{\tau}_{i,k-1},& k>1,\\
		\hat{\tau}_{i-1,K},& k=1.
	\end{array}
	\right.
\end{equation} 
Let $b_{i}$ denote the buffer occupancy when all the tiles of segment $i$ have been downloaded. We define the buffer occupancy with respect to the segment index for the presentation simplicity of buffer update (i.e., the buffer is being updated when a segment has been received). The buffer occupancy is in the unit of playback time, which is commonly used in the existing literature on dynamic adaptive streaming over HTTP (DASH), e.g., \cite{Huang2014BAR}. Receiving one segment will lead to a buffer occupancy increased by $\beta$ seconds. 

Without loss of generality, we set $b_0={b}^{\textsc{ini}}$ as the initial buffer occupancy, and  $\hat{\tau}_{0,K}={\tau}_{1,1}=0$ as the initial starting time. We denote the following decision vectors: $\boldsymbol{r}=(r_{i,k}, i\in\mathcal{I}, k\in\mathcal{K})$, $\boldsymbol{r}_i=(r_{i,k}, k\in\mathcal{K})$,  $\boldsymbol{\tau}=(\tau_{i,k}, i\in\mathcal{I}, k\in\mathcal{K})$, $\hat{\boldsymbol{\tau}}=(\hat{\tau}_{i,k}, i\in\mathcal{I}, k\in\mathcal{K})$,  and $\boldsymbol{b}=(b_{i}, i\in\mathcal{I})$. 

\subsubsection{Downloading Capacity and Buffer Update Constraints} 

Within the downloading period of tile $(i,k)$, the capacity constraint ensures that the total size of the downloaded tile should be no larger than the downloading capacity within the downloading period, i.e.,
\begin{equation}\label{eq:cap-constraint}
	r_{i,k}\beta \leq \int_{{\tau}_{i,k}}^{\hat{\tau}_{i,k}} d(t) dt,~   i\in\mathcal{I}, k\in\mathcal{K}.
\end{equation}

When all the tiles of segment $i$ have been received, the buffer is updated as follows:
\begin{equation}\label{eq:buffer-constraint}
	b_i = \left[b_{i-1} - \left(\hat{\tau}_{i,K} -{\tau}_{i,1} \right)\right]^{+} + \beta, ~ i\in\mathcal{I},
\end{equation}
where the operator $[x]^{+}=\max\{x,0\}$. In \eqref{eq:buffer-constraint}, if the  buffer occupancy $b_{i-1}$ is no smaller than the downloading period, then buffer occupancy $b_{i}$ is the sum of the buffer occupancy immediately before receiving segment $i$ and the received segment length $\beta$. If the buffer becomes empty before receiving segment $i$, then buffer occupancy $b_i$ is equal to $\beta$.

\subsubsection{User's QoE}\label{subsec:qoe}
For the definition of the user's QoE, let $\omega_{i,k}$ denote the fraction of tile $(i,k)$ that is overlapped with the user's FoV. We  define a function $\mu_i(\boldsymbol{r}_i)$ for any segment decision vector $\boldsymbol{r}_i$, indicating the viewing bitrate when the user views segment $i$ under the bitrate decision, i.e.,
\begin{equation}
	\mu_i(\boldsymbol{r}_{i}) = \sum_{k\in\mathcal{K}}  \omega_{i,k} r_{i,k}, ~ i\in\mathcal{I}.
\end{equation}
Note that this is the bitrate that the user actually views, taking into account  the user's FoV.

Similar to some of the existing works \cite{Qian2018-Flare,Zhang2018}, the user's QoE consists of three terms, which are the user's viewing utility $U(\boldsymbol{r})$, the rebuffering loss $L^{\textsc{rb}}(\boldsymbol{\tau},\hat{\boldsymbol{\tau}},\boldsymbol{b})$, and the bitrate degradation loss $L^{\textsc{bd}}(\boldsymbol{r})$, i.e.,
\begin{equation}\label{eq:qoe}
	Q(\boldsymbol{r},\boldsymbol{\tau},\hat{\boldsymbol{\tau}},\boldsymbol{b}) = U(\boldsymbol{r}) - L^{\textsc{rb}}(\boldsymbol{\tau},\hat{\boldsymbol{\tau}},\boldsymbol{b}) -  L^{\textsc{bd}}(\boldsymbol{r}).
\end{equation}
The user's viewing utility of each segment is a function of the user's  viewing bitrate of the segment, i.e., a larger viewing bitrate leads to a higher utility. The user's viewing utility $U(\boldsymbol{r})$ is the summation of the user's viewing utilities of all segments:
\begin{equation}
	U(\boldsymbol{r})  =  \sum_{i\in\mathcal{I}}u_i(\mu_i(\boldsymbol{r}_{i})),
\end{equation}
where $u_i(\cdot)$ for segment $i$ is a nondecreasing concave function. 
This is a generalization of those works considering linear viewing utility, e.g., \cite{Qian2018-Flare,Xiao2018-BAS360,Zhang2018}.

If the tiles of a segment have not been completely received by the video player when the segment is to be played, then rebuffering will occur, and the video will freeze until the tiles of the segment are received.
The rebuffering loss is the  user's loss resulting from the video freeze. This loss is proportional to the rebuffering time. Hence, the rebuffering loss is defined as follows:
\begin{equation}\label{eq:rebuffering-constraint-v2}
	L^{\textsc{rb}}(\boldsymbol{\tau},\hat{\boldsymbol{\tau}},\boldsymbol{b})  = l^{\textsc{rb}} \sum_{i\in\mathcal{I}} \left[\hat{\tau}_{i,K} - \tau_{i,1} - b_{i-1} \right]^+,
\end{equation} 
where $ l^{\textsc{rb}}$ is the unit loss if the user experiences a one-second rebuffering. In \eqref{eq:rebuffering-constraint-v2}, if the downloading duration of segment $i$ is larger than the buffer occupancy $b_{i-1}$, then rebuffering will happen, leading to a rebuffering time of $\left[\hat{\tau}_{i,K} - \tau_{i,1} - b_{i-1} \right]^+$.


The bitrate degradation loss consists of two parts: an inter-segment bitrate degradation loss $ L^{\textsc{bd-e}}(\boldsymbol{r}) $ and an intra-segment bitrate degradation loss $L^{\textsc{bd-a}}(\boldsymbol{r})$. That is,
\begin{equation}\label{eq:bitrate-degradation}
	L^{\textsc{bd}}(\boldsymbol{r})   = L^{\textsc{bd-e}}(\boldsymbol{r})  +  L^{\textsc{bd-a}}(\boldsymbol{r}).
\end{equation}
The inter-segment bitrate degradation loss is the loss resulting from the  bitrate degradation among segments,  i.e.,
\begin{equation}\label{eq:bitrate-degradation-inter}
	L^{\textsc{bd-e}}(\boldsymbol{r})  = l^{\textsc{bd-e}} \sum_{i\in\mathcal{I}/\{1\}}\left[\mu_{i-1}(\boldsymbol{r}_{i-1}) - \mu_i(\boldsymbol{r}_{i})\right]^{+},
\end{equation}
where $l^{\textsc{bd-e}} $ is the loss if the viewing bitrate is degraded by one unit. In \eqref{eq:bitrate-degradation-inter}, if $\mu_{i-1}(\boldsymbol{r}_{i-1}) $ is larger than $\mu_{i}(\boldsymbol{r}_{i}) $, then the inter-segment bitrate degradation occurs, leading to a degradation loss which is proportional to the size of the degradation. The intra-segment bitrate degradation loss is the loss resulting from the bitrate difference among the tiles of each segment, i.e., 
\begin{equation}\label{eq:bitrate-degradation-intra}
	L^{\textsc{bd-a}}(\boldsymbol{r})  =  l^{\textsc{bd-a}} \sum_{i\in\mathcal{I}}\sum_{k\in\mathcal{K}}\omega_{i,k}\left[\frac{\mu_i(\boldsymbol{r}_{i})}{\sum_{k'\in\mathcal{K}}\omega_{i,k'}}-r_{i,k}\right]^+, 
\end{equation}
where $l^{\textsc{bd-a}} $ is the loss if the bitrate is degraded by one unit. In \eqref{eq:bitrate-degradation-intra}, if $\omega_{i,k}$ is positive (i.e., the tile is being viewed), and the bitrate $r_{i,k}$ is smaller than the normalized viewing bitrate $\mu_i(\boldsymbol{r}_{i})/(\sum_{k'\in\mathcal{K}}\omega_{i,k'})$, then the intra-segment bitrate degradation happens, inducing a degradation loss proportional to the size of the degradation. 

\subsubsection{Problem Formulation} 
We aim to determine the  decision vectors $\boldsymbol{r}$, $\boldsymbol{\tau}$, $\hat{ \boldsymbol{\tau}}$, $\boldsymbol{b}$ to maximize the user's QoE subject to the  capacity and buffer update constraints. The problem is formulated as follows:
\begin{subequations}\label{eq:optimization}
	\begin{align} 
		\displaystyle \mathop \textrm{maximize}\limits_{\boldsymbol{r}, \boldsymbol{\tau}, \hat{ \boldsymbol{\tau}}, \boldsymbol{b}} & ~~Q(\boldsymbol{r},\hat{\boldsymbol{\tau}},\boldsymbol{\tau},\boldsymbol{b}) \\
		\textrm{subject to} & ~~ r_{i,k}\in\mathcal{R}, ~i\in\mathcal{I}, k\in\mathcal{K},\\
		& ~~ 0\leq \tau_{i,k}<T, ~i\in\mathcal{I}, k\in\mathcal{K},\\
		&~~ 0\leq \hat{\tau}_{i,k}<T, ~i\in\mathcal{I}, k\in\mathcal{K},\\
		& ~~ b_{i}\geq 0, ~i\in\mathcal{I},\\
		&~~\text{constraints }\eqref{eq:download}, \eqref{eq:cap-constraint},\eqref{eq:buffer-constraint}. \nonumber
	\end{align}
\end{subequations}
Problem \eqref{eq:optimization} is a mixed-integer nonconvex programming problem with nonlinear constraints \eqref{eq:cap-constraint} and \eqref{eq:buffer-constraint}. This problem is challenging to solve, even in an offline case when all the parameters (i.e., the user's FoV and downloading capacity) are known beforehand. 

In this work, we focus on the online algorithm design, which addresses the realistic scenario where the user's FoV of a segment and the capacity for downloading the segment are unknown when the bitrate decision of the segment needs to be made. 

\section{Online Algorithm Design}\label{sec:online}
In this section, we focus on the online algorithm design. For the design of the online algorithm, we consider a set of per-segment problems, each corresponding to one of the segments of the video. The bitrate decision of a segment will be made based on two kinds of bitrate decisions of a set of previous segments. The first is the \emph{actual} bitrate decisions of the set of previous segments, which are the decisions made according to the proposed algorithm. The second is the \emph{optimal} bitrate decisions of the set of previous segments, which are the decisions obtained by solving the corresponding per-segment problems according to the realizations of the real-time downloading capacities and FoVs of the segments.
The performance of the  algorithm will be characterized by dynamic regret\cite{Chen2017-optimization,Mokhtari2016-online}, reflecting the regret of the algorithm in the objective value. We show that our online algorithm yields sublinear dynamic regret, i.e., as the number of segments increases, the dynamic regret of the online algorithm approaches zero on the long-term average.  

We first introduce the set of per-segment problems and the performance metric. Then, we introduce the online algorithm. After that, we show the performance guarantee of the proposed algorithm under particular conditions. Finally, we modify the online algorithm to address the scenario when the conditions are not satisfied in practice.

\subsection{Per-Segment Problem}
For the design of the online algorithm, we consider a set of per-segment optimization problems. The per-segment problem corresponding to segment $i\in\mathcal{I}$  aims to optimize the bitrate decision of segment $i$ at the time that the bitrate decisions of all the previous segments have been made. 

\subsubsection{Per-Segment Objective Function}
For segment $i\in\mathcal{I}$, we define a function $\widetilde{Q}_{i}(\boldsymbol{r}_i~|~\boldsymbol{r}_{i-1},b_{i-1})$, which indicates the user's QoE of segment $i$ under bitrate decision vector $\boldsymbol{r}_i$, given the bitrate decision vector of segment $i-1$ (i.e., $\boldsymbol{r}_{i-1}$) and the buffer occupancy before downloading segment $i$ (i.e., $b_{i-1}$). The function $\widetilde{Q}_{i}(\boldsymbol{r}_i~|~\boldsymbol{r}_{i-1},b_{i-1})$ is given as follows:
\begin{multline}\label{eq:per-segment-utility}
	\widetilde{Q}_{i}(\boldsymbol{r}_i~|~\boldsymbol{r}_{i-1},b_{i-1}) \\ = u_i(\mu_{i}(\boldsymbol{r}_i)) - l^{\textsc{rb}}\left(\frac{1}{\bar{d}_i}\sum_{k\in\mathcal{K}}r_{i,k}\beta - b_{i-1}\right) \\-l^{\textsc{bd-a}} \sum_{k\in\mathcal{K}}\omega_{i,k}\left[\frac{\mu_{i}(\boldsymbol{r}_i)}{\sum_{k'\in\mathcal{K}}\omega_{i,k'}}-r_{i,k}\right]^+\\-l^{\textsc{bd-e}} \left(\mu_{i-1}(\boldsymbol{r}_{i-1})-\mu_{i}(\boldsymbol{r}_i)\right) ,
\end{multline}
where $\bar{d}_i$ is the average downloading capacity when segment $i$ is being downloaded. For simplification, we will use $\widetilde{Q}_{i}(\boldsymbol{r}_i)$ to denote $\widetilde{Q}_{i}(\boldsymbol{r}_i~|~\boldsymbol{r}_{i-1},b_{i-1})$ in the rest of this paper.

In contrast to the user's QoE for segment $i$ (as in Section \ref{subsec:qoe}), equation \eqref{eq:per-segment-utility} has two major differences. First, the downloading time of segment $i$ in \eqref{eq:per-segment-utility}, i.e., ${\sum_{k\in\mathcal{K}}r_{i,k}\beta}/\bar{d}_i$, is an approximation of the actual downloading time (as it is computed in  \eqref{eq:cap-constraint}). This approximation is introduced to simplify the algorithm design, and is close to the actual downloading time, because the change of the downloading capacity  within the downloading period of a segment is small in practice.\footnote{For example, downloading a two-second 4K segment (which has a bitrate of around $2\times50$ Mbps on YouTube) using a 62 Mbps downlink (as the average mobile download speed in Canada in 2019 \cite{speedtest-canada}) leads to a downloading period of $2\times 50  / 62 \approx 1.6$ seconds.} Second, in \eqref{eq:per-segment-utility}, we relax the operators $[\cdot]^+$ involved in the rebuffering loss and the inter-segment bitrate degradation loss. Intuitively, as we focus on per-segment problems,  this relaxation can help with characterizing the impact of the bitrate decision of a segment on the user's QoE of the subsequent segments. For example, the rebuffering loss may fail to capture the difference between downloading a segment (with smaller bitrates) for one second and downloading a segment (with  larger bitrates) for five seconds, if neither downloading process induces rebuffering. The two downloading processes, however, may lead to different likelihood of having a rebuffering later due to the resulting different buffer occupancies after the downloading has been accomplished. This difference can be characterized if the operator $[\cdot]^+$ is relaxed. A similar reason applies for the inter-segment bitrate degradation loss. 


\subsubsection{Per-Segment Optimization Problem} 
Given $\boldsymbol{r}_{i-1}$ and $b_{i-1}$, the per-segment problem for segment $i$ is to determine the bitrate decision  $\boldsymbol{r}_i$  by  maximizing $\widetilde{Q}_i(\boldsymbol{r}_i)$, i.e., 
\begin{equation*}
	\begin{array}{rl}
		\mathop \textrm{maximize}\limits_{\boldsymbol{r}_i}&~~ \widetilde{Q}_i(\boldsymbol{r}_i)\\
		\textrm{subject to}& ~~ r_{i,k}\in{\mathcal{R}}, ~k\in\mathcal{K}.
	\end{array}\eqno{\text{(OPT-SEGMENT-i)}}
\end{equation*}		
Problem (OPT-SEGMENT-i) does not include constraints \eqref{eq:download}, \eqref{eq:cap-constraint}, and \eqref{eq:buffer-constraint} (as in problem \eqref{eq:optimization}) for the following reasons. Constraint \eqref{eq:download} is eliminated, because under the per-segment problem,  the relationship between the downloading time of different segments does not need to be considered. Constraint \eqref{eq:cap-constraint} is eliminated due to the downloading time approximation. In addition, with the set of per-segment problems, at the time that the bitrate decision of segment $i$ is optimized, the buffer occupancy $b_{i-1}$ is directly observed, so the buffer update constraint \eqref{eq:buffer-constraint}  is not required.

\subsection{Performance Metrics}\label{subsec:metrics}
Let $\boldsymbol{r}_{i}^{o}$ denote the bitrate decision of segment $i$ resulting from the online algorithm. We evaluate the performance of  the online algorithm using dynamic regret \cite{Chen2017-optimization,Mokhtari2016-online}. The dynamic regret is defined as the difference between the user's QoE under the optimal bitrate selection decision (i.e., the decision after the realization of the user's downloading capacity and FoV) and  the user's QoE under the actual bitrate selection decision determined by the proposed algorithm (i.e., the decision before the realization of the user's downloading capacity and FoV). The definition of the dynamic regret is as follows:
\begin{equation}\label{eq:dynamic-regret}
	\text{Reg}_{I} \triangleq \sum_{i\in\mathcal{I}}\left(\widetilde{Q}_i(\boldsymbol{r}_i^*) - \widetilde{Q}_i(\boldsymbol{r}_{i}^{o})\right),
\end{equation}
where $\boldsymbol{r}_i^*$ is the optimal bitrate decision for segment $i$ obtained by solving problem (OPT-SEGMENT-i). We aim to design an online algorithm that achieves sublinear  dynamic regret, i.e., 
\begin{equation}
	\lim_{I\rightarrow\infty} \text{Reg}_{I} / I = 0, 
\end{equation}
which implies that as the number of segments  approaches infinity, the dynamic regret of the online algorithm  on the long-term average (i.e., $\text{Reg}_{I} / I$) approaches zero.


\subsection{Online Bitrate Selection Algorithm}\label{subsec:algorithm}
We design an online algorithm, called OBS360 algorithm. 
The online algorithm design is inspired by the OGD algorithms, e.g.,  \cite{Mokhtari2016-online,Chen2017-optimization,paternain2017online}. 
Those existing algorithms \cite{Mokhtari2016-online,Chen2017-optimization,paternain2017online}, however, cannot be directly applied in VA360 video streaming service. This is because for those algorithms, the decision of an object (e.g., a segment) is made based on the decision of the  previous object and the realization of the  parameters related to that object. In VA360 video streaming, however, the parameter realization of the previous object may not have been observed when the decision of an object is to be made. To address this, in our proposed OBS360 algorithm, a decision is made based on the decisions of a set of objects whose parameter realizations have been observed and  the corresponding parameter realizations of those objects. Such an algorithm design is more challenging than those existing ones, because the decision making of an object should take into account the parameter realizations and decisions of a set of objects as well as the parameter realization correlations among the objects.  Under a special case that at the time when a decision is made, the parameter realization of the  previous decision can always be obtained, our proposed algorithm is equivalent to the existing OGD algorithms \cite{Mokhtari2016-online,Chen2017-optimization,paternain2017online}.

In the following, we first introduce an auxiliary set for each segment, which indicates the segments whose parameter  realizations (e.g., capacity for downloading the segment, FoV for viewing the segment) are observed after the bitrate decision of the previous segment is made. Then, we present our proposed  OBS360 algorithm.


\subsubsection{Auxiliary Set}\label{eq:j}
Let $\tilde{I}_i$ denote the segment index such that segment $i$ has been viewed by the user during the time period when the bitrate decisions of segments $\tilde{I}_i$ and $\tilde{I}_i+1$ are made.
Since a segment should be downloaded before it is being viewed, the FoV and downloading capacity information of segment $i$ has been observed by the video player when the bitrate decision of segment $\tilde{I}_i + 1$ is to be made.   
The auxiliary set $\mathcal{I}_i\subset\mathcal{I}$ for  segment $i\in\mathcal{I}\cup\{I+1\}$ is defined as follows:
\begin{equation}\label{eq:ii}
	\mathcal{I}_i\triangleq\{i'\in\mathcal{I}~|~\tilde{I}_{i'} = i -1 \},
\end{equation}
which is the set of segments which have been viewed by the user during the time period when the bitrate decisions of segments $i -1$ and $i$ are made. The consideration of $i=I+1$ is used to include the segments whose parameter realizations are observed after the bitrate decision of segment $I$ is made.\footnote{Note that given the bitrate decisions and the parameter realizations of all segments in $\mathcal{I}$, the vectors $\mathcal{I}_{I+1}$ and $\boldsymbol{r}^o_{I+1}$ can be determined.}


\subsubsection{OBS360 Algorithm}
The OBS360 algorithm is given in Algorithm \ref{alg:saddle-point}. The algorithm first initializes the following parameters: an initial bitrate vector $\boldsymbol{r}_0$, which can be any bitrate in set $\mathcal{R}^K$; a parameter $\alpha>0$, which corresponds to the positive stepsize in the existing OGD algorithms \cite{Mokhtari2016-online,Chen2017-optimization,paternain2017online}.

\begin{figure}[t]
	\begin{minipage}{\linewidth}
		\begin{algorithm}[H]
			\caption{OBS360 Algorithm}\label{alg:saddle-point}
			\small
			\begin{algorithmic}[1]
				\STATE \textbf{Initialization:} $\boldsymbol{r}_0$ and  $\alpha$;
				\FORALL{$i\in\mathcal{I}$}
				\STATE{Obtain $\mathcal{I}_i$ according to \eqref{eq:ii};}
				\IF{$\mathcal{I}_i\neq \emptyset$}
				\STATE{Compute $J_i$ according to \eqref{eq:online-J};}
				\STATE{Update $\boldsymbol{r}^o_i$ according to \eqref{eq:online-update};}
				\ELSE
				\STATE{Update $\boldsymbol{r}^o_i$ according to \eqref{eq:online-update-same};}
				\ENDIF
				\ENDFOR
			\end{algorithmic}
		\end{algorithm}
	\end{minipage}
\end{figure}

For any segment $i$, the video player first obtains the auxiliary set $\mathcal{I}_i$, which contains the indices of the segments whose FoV and downloading capacity information is newly observed as defined in \eqref{eq:ii}. If the set $\mathcal{I}_i$ is non-empty, then the bitrate decision of segment $i$ will be made based on the bitrate decisions of the segments referred in set $\mathcal{I}_i$ and the corresponding downloading capacity and FoV information when each of those segments is downloaded and viewed, respectively. Specifically, the bitrate decision of segment $i$  is updated as follows:
\begin{equation}\label{eq:online-update}
	\boldsymbol{r}_i^o =  \arg \min_{\boldsymbol{r}_i\in{\mathcal{R}}^K} - \nabla \widetilde{Q}_{J_i}(\boldsymbol{r}_{J_i}^o)^{\top} (\boldsymbol{r}_i-\boldsymbol{r}_{J_i}^o) + \frac{1}{2\alpha} \lVert \boldsymbol{r}_i - \boldsymbol{r}_{J_i}^o\rVert^2,
\end{equation}
where $\nabla \widetilde{Q}_{J_i}(\boldsymbol{r}_{J_i}^o)$ is the subgradient of $\widetilde{Q}_{J_i}(\boldsymbol{r}_{J_i}^o)$, and  $J_i$ is defined as 
\begin{equation}\label{eq:online-J}
	J_i \triangleq \arg \min_{j\in\mathcal{I}_i} - \nabla \widetilde{Q}_j(\boldsymbol{r}_j^o)^{\top} (\boldsymbol{r}_j^*-\boldsymbol{r}_j^o).
\end{equation}
Specifically, $J_i $ is the segment index in set $\mathcal{I}_i$ such that the dot product of the negative value of the subgradient $\nabla \widetilde{Q}_j(\boldsymbol{r}_j^o)$ and the difference between the optimal decision of segment  $J_i $ (obtained by solving the corresponding per-segment problem) and the actual decision of segment  $J_i $ (obtained according to the algorithm) is the smallest, comparing with other segments in  set $\mathcal{I}_i$.
If  the set $\mathcal{I}_i$ is empty, then the bitrate decision of segment $i$  is the same as the bitrate decision of segment $i-1$:
\begin{equation}\label{eq:online-update-same}
	\boldsymbol{r}_i^o = \boldsymbol{r}_{i-1}^o.
\end{equation}

In the special case when set $\mathcal{I}_i = \{i-1\}$ for all $i\in\mathcal{I}\cup\{I+1\}$, i.e., the information of segment $i-1$ can always be observed during the time period  when the decisions of segments $i-1$ and $i$ are made,  Algorithm \ref{alg:saddle-point}  is equivalent to the existing OGD  algorithms in \cite{Mokhtari2016-online,Chen2017-optimization,paternain2017online}.

\subsection{Performance Analysis}\label{subsec:online-performacne}
We proceed to show that Algorithm \ref{alg:saddle-point} leads to sublinear dynamic regret. Note that the analysis in this section is based on a case where the bitrate set $\mathcal{R}$ is convex. This is commonly assumed when the regret is derived in the existing online convex optimization algorithms \cite{hazan2016introduction,shalev2012online, Mokhtari2016-online,Chen2017-optimization,paternain2017online,Hall2015online,jadbabaie2015online}. Hence, our result on sublinear dynamic regret reveals the performance of the proposed algorithm under the particular case, and it provides an insight that as the number of segments increases, the performance of the proposed online algorithm approaches the offline optimal solution. The performance of the proposed algorithm over a practical bitrate set is evaluated using simulations in Section \ref{sec:simulation}.

In the following, we first show a condition regarding the variation of the time-varying parameters, and then present the bound of the dynamic regret of Algorithm \ref{alg:saddle-point}. Finally, we formally state the sublinear dynamic regret.


\subsubsection{Condition on Parameter Variation}
In online convex optimization techniques considering time-varying parameters, without restricting the varying of the parameters, obtaining a bound on dynamic regret is not possible \cite{jadbabaie2015online}. Here, we impose a condition on the varying of the parameters as follows:
\begin{condition}[Time-Varying Parameters]\label{ass:parameters}
	The variation of the time-varying parameters, i.e., downloading capacity $(d(t),t\in\mathcal{T})$ and FoV $(\boldsymbol{\omega}_{i},i\in\mathcal{I})$, should be small in the sense that there exist nonnegative values $V_{\emptyset}$ and $V_{r}$ such that the following inequalities hold for  any number of segments $I$:
	\begin{itemize}
		\item The number of the set  $\mathcal{I}_i$ that satisfies $\mathcal{I}_i=\emptyset$ for all $i\in\mathcal{I}\cup\{I+1\}$ is bounded, i.e.,  
		\begin{equation}\label{eq:assumption-1}
			\sum_{i\in\mathcal{I}\cup\{I+1\}} \mathbbm{1}(\mathcal{I}_i=\emptyset)  \leq V_{\emptyset},
		\end{equation}
		where $\mathbbm{1}(\cdot)$ is an indicator function, i.e., $\mathbbm{1}(\mathcal{I}_i=\emptyset) = 1$ if $\mathcal{I}_i=\emptyset$, and is equal to zero otherwise.
		\item Let $J_{i}^{\dag}$ denote the index of the segment such that the bitrates of segment $J_i$ are made based on it, i.e., $J_{i}^{\dag}=J_{i'}$ with $i'=J_i$.\footnote{If $\mathcal{I}_{i}=\emptyset$, we define $ J_i \triangleq J_{h(i)}$, where $h(i) = \{h~|~\mathcal{I}_{h} \neq \emptyset, \mathcal{I}_{h+1} = \mathcal{I}_{h+2} = \cdots =\mathcal{I}_{i} = \emptyset\}$  is the index of the segment which is the last segment before segment $i$ (i.e., $h(i)< i$) whose $\mathcal{I}_{h(i)}\neq \emptyset$.
		} 
		The difference between the optimal solution to per-segment problem ${J_i}$ and that to per-segment problem ${J_{i}^{\dag}}$ is small for all $i\in\mathcal{I}\cup\{I+1\}$ such that 
		\begin{equation}\label{eq:assumption-3}
			\sum_{i\in\mathcal{I}\cup\{I+1\}}|\mathcal{I}_i|\lVert \boldsymbol{r}^*_{J_i} -  \boldsymbol{r}^*_{J_{i}^{\dag}} \rVert \leq V_{r},
		\end{equation}
		where $|\mathcal{I}_i|$ is the cardinality of  set $\mathcal{I}_i$.
	\end{itemize} 
\end{condition}
In Condition \ref{ass:parameters}, inequality \eqref{eq:assumption-1}  implies that the number of the bitrate decisions that are not updated based on \eqref{eq:online-update} is bounded. Inequality \eqref{eq:assumption-3} ensures that the variations of the parameters are relatively small, such that the changes among  the optimal solutions to the per-segment problems are bounded. In the special case when $\mathcal{I}_i = \{i-1\}$ for all $i\in\mathcal{I}\cup\{I+1\}$, Condition \ref{ass:parameters} is equivalent to each of the parameter variation conditions in the existing works  \cite{Chen2017-optimization,Mokhtari2016-online} on OGD algorithms.


\subsubsection{Bound of the Dynamic Regret}
In the following, we first show the bound of the regret of each segment under Algorithm \ref{alg:saddle-point}. We then show the bound of the dynamic regret.

The bound of the regret of each segment  is given in Lemma \ref{lem:app1}. Note that in this lemma, we use two different subscripts $s$ and $i$ to refer to the indices of different segments in order to make the presentation clear. 
\begin{lemma}[Regret of Each Segment]\label{lem:app1}
	Under Condition \ref{ass:parameters} and convex decision set $\mathcal{R}$, the regret of any segment $s\in\mathcal{I}$, i.e., $\widetilde{Q}_s(\boldsymbol{r}^*_s)-\widetilde{Q}_{s}(\boldsymbol{r}_s^o)$, is upper-bounded as follows:
	\begin{itemize}	
		\item[{(A)}]  For any segment $s\in\mathcal{I}_i$, $i\in\mathcal{I}\cup\{I+1\}$, 
		\begin{multline}\label{eq:app-lemma1}
			\widetilde{Q}_s(\boldsymbol{r}^*_s) -\widetilde{Q}_{s}(\boldsymbol{r}_s^o) \leq \frac{1}{2\alpha} \Big(2R \lVert\boldsymbol{r}_{J_i}^*- \boldsymbol{r}_{J_i^{\dag}}^*\rVert  \\
			+ \lVert\boldsymbol{r}_{J_i}^o - \boldsymbol{r}_{J_i^{\dag}}^*\rVert^2
			-\lVert \boldsymbol{r}_{J_i}^* -\boldsymbol{r}_i^o\rVert^2 \Big) +\frac{\alpha}{2} \overline{Q}^2  .
		\end{multline}
		\item[{(B)}] For any  segment $s\in\widetilde{\mathcal{I}}\triangleq \{s'\in\mathcal{I}~|~s'\notin\mathcal{I}_i,i\in\mathcal{I}\cup\{I+1\}\}$,
		\begin{equation}\label{eq:app-lemma2}
			\widetilde{Q}_s(\boldsymbol{r}^*_s) -\widetilde{Q}_{s}(\boldsymbol{r}_s^o) 
			\leq \frac{3R^2}{2\alpha} +\frac{\alpha}{2} \overline{Q}^2.
		\end{equation}
	\end{itemize}
	The constant $\overline{Q}$ is the bound of the norm of the subgradient  $\nabla \widetilde{Q}_i(\boldsymbol{r}_{i})$, i.e., $\lVert\nabla  \widetilde{Q}_i(\boldsymbol{r}_i)\rVert\leq \overline{Q}$ for all $i\in\mathcal{I}$ and $\boldsymbol{r}_i\in{\mathcal{R}}^K$. The constant $R$ is the radius of the convex set ${\mathcal{R}}^{K}$, i.e., $\lVert \boldsymbol{r}_i - \boldsymbol{r}_j\rVert\leq R$, for all $\boldsymbol{r}_i,\boldsymbol{r}_j\in{\mathcal{R}}^{K}$.  
\end{lemma}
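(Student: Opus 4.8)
The plan is to prove the two bounds (A) and (B) separately, in each case by exploiting the fact that the update rule \eqref{eq:online-update} is exactly a projected-gradient step and hence admits the standard ``three-point'' (projection) inequality from online convex optimization. First I would establish the basic one-step estimate: because $\boldsymbol{r}_i^o$ minimizes the strongly convex function $\boldsymbol{r}_i \mapsto -\nabla\widetilde{Q}_{J_i}(\boldsymbol{r}_{J_i}^o)^{\top}(\boldsymbol{r}_i-\boldsymbol{r}_{J_i}^o) + \tfrac{1}{2\alpha}\lVert\boldsymbol{r}_i-\boldsymbol{r}_{J_i}^o\rVert^2$ over the convex set $\mathcal{R}^K$, the first-order optimality condition gives, for every feasible $\boldsymbol{z}$,
\begin{equation*}
\big(\alpha\,\nabla\widetilde{Q}_{J_i}(\boldsymbol{r}_{J_i}^o) - (\boldsymbol{r}_i^o - \boldsymbol{r}_{J_i}^o)\big)^{\top}(\boldsymbol{z}-\boldsymbol{r}_i^o)\;\ge\;0 .
\end{equation*}
Taking $\boldsymbol{z}=\boldsymbol{r}_{J_i}^*$ (legitimate since $\mathcal{R}$ is convex) and rearranging, together with the cosine/Pythagoras identity $2(\boldsymbol{a}-\boldsymbol{b})^{\top}(\boldsymbol{c}-\boldsymbol{b}) = \lVert\boldsymbol{a}-\boldsymbol{b}\rVert^2 + \lVert\boldsymbol{c}-\boldsymbol{b}\rVert^2 - \lVert\boldsymbol{a}-\boldsymbol{c}\rVert^2$, yields an inequality of the form $\nabla\widetilde{Q}_{J_i}(\boldsymbol{r}_{J_i}^o)^{\top}(\boldsymbol{r}_{J_i}^* - \boldsymbol{r}_{J_i}^o) \le \tfrac{1}{2\alpha}(\lVert\boldsymbol{r}_{J_i}^*-\boldsymbol{r}_{J_i}^o\rVert^2 - \lVert\boldsymbol{r}_{J_i}^*-\boldsymbol{r}_i^o\rVert^2) + \tfrac{\alpha}{2}\lVert\nabla\widetilde{Q}_{J_i}(\boldsymbol{r}_{J_i}^o)\rVert^2$, and the gradient-norm term is bounded by $\tfrac{\alpha}{2}\overline{Q}^2$.

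Next I would convert this gradient inequality into a QoE-regret inequality. Since $\widetilde{Q}_i$ is concave (each term in \eqref{eq:per-segment-utility} is concave in $\boldsymbol{r}_i$ once the $[\cdot]^+$ operators in the rebuffering and inter-segment losses are relaxed, while the remaining $[\cdot]^+$ terms are convex and appear with a minus sign), the subgradient inequality gives $\widetilde{Q}_s(\boldsymbol{r}_s^*) - \widetilde{Q}_s(\boldsymbol{r}_s^o) \le \nabla\widetilde{Q}_s(\boldsymbol{r}_s^o)^{\top}(\boldsymbol{r}_s^* - \boldsymbol{r}_s^o)$ for every $s$. For part (A), where $s\in\mathcal{I}_i$, the definition \eqref{eq:online-J} of $J_i$ as the \emph{minimizer} of $-\nabla\widetilde{Q}_j(\boldsymbol{r}_j^o)^{\top}(\boldsymbol{r}_j^*-\boldsymbol{r}_j^o)$ over $j\in\mathcal{I}_i$ means that for any $s\in\mathcal{I}_i$ we have $\nabla\widetilde{Q}_s(\boldsymbol{r}_s^o)^{\top}(\boldsymbol{r}_s^*-\boldsymbol{r}_s^o) \le \nabla\widetilde{Q}_{J_i}(\boldsymbol{r}_{J_i}^o)^{\top}(\boldsymbol{r}_{J_i}^*-\boldsymbol{r}_{J_i}^o)$, so it suffices to bound the right-hand side. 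I then plug in the one-step estimate above, but note that the update \eqref{eq:online-update} uses $\boldsymbol{r}_{J_i}^o$ as the base point whereas the telescoping quantity we ultimately want involves $\boldsymbol{r}_{J_i^{\dag}}^*$; the gap is closed by writing $\lVert\boldsymbol{r}_{J_i}^*-\boldsymbol{r}_{J_i}^o\rVert^2 = \lVert(\boldsymbol{r}_{J_i}^* - \boldsymbol{r}_{J_i^{\dag}}^*) + (\boldsymbol{r}_{J_i^{\dag}}^* - \boldsymbol{r}_{J_i}^o)\rVert^2$, expanding, and bounding the cross term by $2R\lVert\boldsymbol{r}_{J_i}^*-\boldsymbol{r}_{J_i^{\dag}}^*\rVert$ using the diameter bound $\lVert\boldsymbol{r}_{J_i^{\dag}}^*-\boldsymbol{r}_{J_i}^o\rVert\le R$ and discarding the (nonnegative) $\lVert\boldsymbol{r}_{J_i}^*-\boldsymbol{r}_{J_i^{\dag}}^*\rVert^2$ term or absorbing it. This reproduces \eqref{eq:app-lemma1} with the $\lVert\boldsymbol{r}_{J_i}^o-\boldsymbol{r}_{J_i^{\dag}}^*\rVert^2$ term kept intact for later telescoping and the $-\lVert\boldsymbol{r}_{J_i}^*-\boldsymbol{r}_i^o\rVert^2$ term carried over from the Pythagoras identity.

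Part (B) is the easy case: for $s\in\widetilde{\mathcal{I}}$ the decision was carried over unchanged via \eqref{eq:online-update-same}, so no update inequality is available; instead I bound $\widetilde{Q}_s(\boldsymbol{r}_s^*)-\widetilde{Q}_s(\boldsymbol{r}_s^o)\le\nabla\widetilde{Q}_s(\boldsymbol{r}_s^o)^{\top}(\boldsymbol{r}_s^*-\boldsymbol{r}_s^o)\le \lVert\nabla\widetilde{Q}_s(\boldsymbol{r}_s^o)\rVert\,\lVert\boldsymbol{r}_s^*-\boldsymbol{r}_s^o\rVert\le \overline{Q}R$ by Cauchy--Schwarz, and then observe $\overline{Q}R \le \tfrac{3R^2}{2\alpha} + \tfrac{\alpha}{2}\overline{Q}^2$ by the AM--GM inequality $\overline{Q}R\le\tfrac{R^2}{2\alpha}+\tfrac{\alpha}{2}\overline{Q}^2$ (the coefficient $3$ instead of $1$ just gives slack). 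The main obstacle I anticipate is bookkeeping around the index maps $J_i$ and $J_i^{\dag}$ and the base-point mismatch in part (A): one must be careful that $\boldsymbol{r}_{J_i^{\dag}}^*$ is genuinely the optimal solution of the per-segment problem whose optimum was used to \emph{produce} $\boldsymbol{r}_{J_i}^o$, so that in the subsequent (telescoping) step the term $\lVert\boldsymbol{r}_{J_i}^o-\boldsymbol{r}_{J_i^{\dag}}^*\rVert^2$ appearing in \eqref{eq:app-lemma1} can be matched against a $-\lVert\boldsymbol{r}_{J_{i'}}^*-\boldsymbol{r}_{i'}^o\rVert^2$ term from an earlier segment; the definition in the footnote (extending $J_i$ to empty $\mathcal{I}_i$ via the last nonempty predecessor $h(i)$) is precisely what makes this chaining well-defined, and verifying that it does is where the care is needed. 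The rest is the routine projection-inequality algebra sketched above.
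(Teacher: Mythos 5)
Your part (A) is essentially the paper's own proof: the one-step strong-convexity/projection inequality for the update \eqref{eq:online-update}, concavity of $\widetilde{Q}_s$ combined with the minimizing property of $J_i$ in \eqref{eq:online-J} to reduce the regret of every $s\in\mathcal{I}_i$ to the single quantity $\nabla\widetilde{Q}_{J_i}(\boldsymbol{r}_{J_i}^o)^{\top}(\boldsymbol{r}_{J_i}^*-\boldsymbol{r}_{J_i}^o)$, Young's inequality with parameter $1/\alpha$ to turn the gradient term into $\tfrac{\alpha}{2}\overline{Q}^2$, and the base-point change from $\boldsymbol{r}_{J_i}^o$ to $\boldsymbol{r}_{J_i^{\dag}}^*$. (Minor slip: after multiplying the first-order optimality condition by $-\alpha$ your displayed variational inequality should read $\le 0$, not $\ge 0$; the estimate you then state is nevertheless the correct one.) The one step that needs repair is the last: you cannot ``discard'' the nonnegative term $\lVert\boldsymbol{r}_{J_i}^*-\boldsymbol{r}_{J_i^{\dag}}^*\rVert^2$ from an upper bound, and absorbing it via $\lVert\boldsymbol{r}_{J_i}^*-\boldsymbol{r}_{J_i^{\dag}}^*\rVert^2\le R\,\lVert\boldsymbol{r}_{J_i}^*-\boldsymbol{r}_{J_i^{\dag}}^*\rVert$ yields the constant $3R$ rather than the stated $2R$. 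To recover $2R$ exactly, replace the square expansion by the polarization identity $\lVert\boldsymbol{r}_{J_i}^*-\boldsymbol{r}_{J_i}^o\rVert^2-\lVert\boldsymbol{r}_{J_i}^o-\boldsymbol{r}_{J_i^{\dag}}^*\rVert^2=(\boldsymbol{r}_{J_i}^*-\boldsymbol{r}_{J_i^{\dag}}^*)^{\top}(\boldsymbol{r}_{J_i}^*-2\boldsymbol{r}_{J_i}^o+\boldsymbol{r}_{J_i^{\dag}}^*)$ followed by Cauchy--Schwarz and $\lVert\boldsymbol{r}_{J_i}^*-2\boldsymbol{r}_{J_i}^o+\boldsymbol{r}_{J_i^{\dag}}^*\rVert\le 2R$, which is what the paper does; this is a constant-factor issue only and would not change Theorem \ref{thm:regret} or Corollary \ref{coro:regret}.

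For part (B) you take a genuinely different and cleaner route. The paper extends the index set to the first $\widehat{I}>I$ with $\mathcal{I}_{\widehat{I}}\neq\emptyset$ and then specializes \eqref{eq:app-lemma1} to the segments of $\widetilde{\mathcal{I}}$, bounding its three norm terms by $2R^2$, $R^2$ and $0$ to produce $3R^2/(2\alpha)$. Your direct chain $\widetilde{Q}_s(\boldsymbol{r}_s^*)-\widetilde{Q}_s(\boldsymbol{r}_s^o)\le\nabla\widetilde{Q}_s(\boldsymbol{r}_s^o)^{\top}(\boldsymbol{r}_s^*-\boldsymbol{r}_s^o)\le\overline{Q}R\le\tfrac{R^2}{2\alpha}+\tfrac{\alpha}{2}\overline{Q}^2$ is valid for every segment and immediately implies \eqref{eq:app-lemma2} with room to spare; it avoids the auxiliary construction entirely, at the cost of making the particular constant $3R^2$ look arbitrary (in the paper it is inherited from part (A)). Your closing remark---that the $\lVert\boldsymbol{r}_{J_i}^o-\boldsymbol{r}_{J_i^{\dag}}^*\rVert^2$ term must later cancel against the $-\lVert\boldsymbol{r}_{J_{i'}}^*-\boldsymbol{r}_{i'}^o\rVert^2$ term generated at segment $i'=J_i$, and that the footnote's extension of $J_i$ to empty $\mathcal{I}_i$ is what makes this chaining well defined---is exactly the right reading of why $J_i^{\dag}$ is introduced.
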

\begin{proof}
	For the proof of the statements (A) and (B) in Lemma \ref{lem:app1}, we first show an inequality that holds for any segment $i\in\mathcal{I}\cup\{I+1\}$ whose $\mathcal{I}_i\neq \emptyset$. Specifically, for any of these segments, the bitrate decision $\boldsymbol{r}_{i}^{o}$ is the optimal solution of the following optimization problem (according to \eqref{eq:online-update}):
	\begin{equation}\label{eq:app-online-update-r}
		\min_{\boldsymbol{r}_i\in{\mathcal{R}}^K} f_i(\boldsymbol{r}_i)\triangleq -\nabla \widetilde{Q}_{J_i}(\boldsymbol{r}_{J_i}^o)^{\top} (\boldsymbol{r}_i-\boldsymbol{r}_{J_i}^o) + \frac{1}{2\alpha} \lVert \boldsymbol{r}_i - \boldsymbol{r}_{J_i}^o\rVert^2.
	\end{equation}
	It can be shown that function $f_i(\boldsymbol{r}_i)$ in \eqref{eq:app-online-update-r} is $1/\alpha$-strongly convex (Section 3.4 in \cite{bubeck2014theory}) for any $\alpha > 0$, and this leads to the following inequality: for any $i\in\mathcal{I}\cup\{I+1\}$ with $\mathcal{I}_i\neq \emptyset$,
	\begin{equation}\label{eq:app-plug}
		f_i(\boldsymbol{r}_{J_i}^*) 
		\geq f_i(\boldsymbol{r}_{i}^o) + \frac{1}{2\alpha} \lVert \boldsymbol{r}_{J_i}^*- \boldsymbol{r}_{i}^o \rVert^2.
	\end{equation}
	
	Based on inequality \eqref{eq:app-plug}, we now prove the statements (A) and (B) in Lemma \ref{lem:app1} as follows.
	
	\emph{Proof for Statement (A):} 
	For  any $s\in\mathcal{I}_i$ with $i\in\mathcal{I}\cup\{I+1\}$, by adding $-\widetilde{Q}_{s}(\boldsymbol{r}_s^o)$ to both sides of the inequality \eqref{eq:app-plug}, substituting  $f_i(\boldsymbol{r}_{J_i}^*)$ and $f_i(\boldsymbol{r}_i^o)$ according to \eqref{eq:app-online-update-r}, and reordering the inequality, we have
	\begin{align}\label{eq:app-k2}
		\begin{split}
			&~-\widetilde{Q}_{s}(\boldsymbol{r}_s^o)  - \nabla \widetilde{Q} _{J_i}(\boldsymbol{r}_{J_i}^o)^{\top} (\boldsymbol{r}_i^o -\boldsymbol{r}_{J_i}^o) \\
			\overset{\text{(a)}}{\leq} &~\frac{1}{2\alpha} \left(\lVert \boldsymbol{r}_{J_i}^*- \boldsymbol{r}_{J_i}^o\rVert^2 -\lVert \boldsymbol{r}_{J_i}^*- \boldsymbol{r}_{i}^o \rVert^2 - \lVert \boldsymbol{r}_{i}^o - \boldsymbol{r}_{J_i}^o\rVert^2 \right)\\ 
			&~- \widetilde{Q}_{s}(\boldsymbol{r}_s^o) -\nabla \widetilde{Q}_{s}(\boldsymbol{r}_{s}^o)^{\top} (\boldsymbol{r}_s^*-\boldsymbol{r}_{s}^o)\\
			\overset{\text{(b)}}{\leq} &~\frac{1}{2\alpha} \left(\lVert \boldsymbol{r}_{J_i}^*\!- \!\boldsymbol{r}_{J_i}^o\rVert^2 -\lVert \boldsymbol{r}_{J_i}^*\!- \! \boldsymbol{r}_{i}^o \rVert^2 - \lVert \boldsymbol{r}_{i}^o \!- \! \boldsymbol{r}_{J_i}^o\rVert^2 \right) \\
			&~-  \widetilde{Q}_s(\boldsymbol{r}^*_s),
		\end{split}
	\end{align}
	where (a) uses the definition of $J_i$ in \eqref{eq:online-J}, and (b) is due to the concavity of $\widetilde{Q}_s(\boldsymbol{r}_s)$. In inequality \eqref{eq:app-k2},  $\nabla \widetilde{Q}_{J_i}(\boldsymbol{r}^o_{J_i})^{\top} (\boldsymbol{r}_i^o -\boldsymbol{r}_{J_i}^o)  $  satisfies
	\begin{align}\label{eq:app-k3}
		\begin{split}
			&\nabla \widetilde{Q}_{J_i}(\boldsymbol{r}^o_{J_i})^{\top} (\boldsymbol{r}_i^o -\boldsymbol{r}_{J_i}^o)  
			\leq   \lVert \nabla \widetilde{Q}_{J_i}(\boldsymbol{r}_{J_i}^o)\rVert\lVert \boldsymbol{r}_i^o -\boldsymbol{r}_{J_i}^o  \rVert\\
			\leq & ~ \frac{\lVert \nabla  \widetilde{Q}_{J_i}(\boldsymbol{r}_{J_i}^o)\rVert^2 }{2\eta}+ \frac{\eta\lVert \boldsymbol{r}_i^o -\boldsymbol{r}_{J_i}^o\rVert^2}{2} \\
			\leq & ~ \frac{ \overline{Q}^2 }{2\eta}+ \frac{\eta\lVert \boldsymbol{r}_i^o - \boldsymbol{r}_{J_i}^o\rVert^2}{2},
		\end{split}
	\end{align}
	where $\eta$ is a positive constant. 
	
	Substituting \eqref{eq:app-k3} into \eqref{eq:app-k2} and rearranging it, we obtain
	\begin{align}\label{eq:app-k4}
		\begin{split}
			& \widetilde{Q}_s(\boldsymbol{r}^*_s) \!-\! \widetilde{Q}_{s}(\boldsymbol{r}_s^o) 
			\! \overset{\text{(c)}}{\leq}\! ~ \frac{\lVert\boldsymbol{r}_{J_i}^* \!-\! \boldsymbol{r}_{J_i}^o\rVert^2 \!-\! \lVert \boldsymbol{r}_{J_i}^*   \!-\! \boldsymbol{r}_{i}^o \rVert^2  }{2\alpha} \!+\! \frac{\alpha}{2} \overline{Q}^2,
		\end{split}
	\end{align}
	where (c) follows by setting $\eta = 1/\alpha$, i.e., $\eta/2-1/(2\alpha)=0$. In \eqref{eq:app-k4}, $\lVert\boldsymbol{r}_{J_i}^* - \boldsymbol{r}_{J_i}^o\rVert^2 -\lVert \boldsymbol{r}_{J_i}^*   - \boldsymbol{r}_{i}^o \rVert^2  $  satisfies
	\begin{align}\label{eq:app-k5}
		\begin{split}
			&~\lVert\boldsymbol{r}^*_{J_i} -  \boldsymbol{r}_{J_i}^o\rVert^2 - \lVert \boldsymbol{r}_{J_i}^* - \boldsymbol{r}_i^o\rVert^2\\
			= &~ \lVert\boldsymbol{r}^*_{J_i}-  \boldsymbol{r}_{J_i}^o\rVert^2- \lVert\boldsymbol{r}_{J_i}^o - \boldsymbol{r}_{J_i^{\dag}}^*\rVert^2 + \lVert\boldsymbol{r}_{J_i}^o -\boldsymbol{r}_{J_i^{\dag}}^*\rVert^2 \\
			&~-\lVert \boldsymbol{r}_{J_i}^* -\boldsymbol{r}_i^o\rVert^2\\
			= &~ \lVert\boldsymbol{r}_{J_i}^*-\boldsymbol{r}_{J_i^{\dag}}^*\rVert \lVert\boldsymbol{r}^*_{J_i} - 2\boldsymbol{r}_{J_i}^o + \boldsymbol{r}_{J_i^{\dag}}^*\rVert + \lVert\boldsymbol{r}_{J_i}^o - \boldsymbol{r}_{J_i^{\dag}}^*\rVert^2 \\
			&~- \lVert \boldsymbol{r}_{J_i}^* - \boldsymbol{r}_i^o\rVert^2\\
			\overset{\text{(d)}}{\leq} & ~ 2R \lVert\boldsymbol{r}_{J_i}^*- \boldsymbol{r}_{J_i^{\dag}}^*\rVert + \lVert\boldsymbol{r}_{J_i}^o - \boldsymbol{r}_{J_i^{\dag}}^*\rVert^2-\lVert \boldsymbol{r}_{J_i}^* -\boldsymbol{r}_i^o\rVert^2,
		\end{split}
	\end{align}
	where (d) is  due to $\lVert \boldsymbol{r}_i - \boldsymbol{r}_j\rVert\leq R,~\forall \boldsymbol{r}_i,\boldsymbol{r}_j\in{\mathcal{R}}^{K}$.
	
	Substituting \eqref{eq:app-k5} into \eqref{eq:app-k4}, we obtain  inequality  \eqref{eq:app-lemma1}.

	\emph{Proof for Statement (B):} We first introduce an index  $\widehat{I} = \{\widehat{I}>I~|~\mathcal{I}_{\widehat{I}}\neq\emptyset,  \mathcal{I}_{I+1}=\mathcal{I}_{I+2}=\cdots=\mathcal{I}_{\widehat{I}-1}=\emptyset\}$, i.e., the first segment after segment $I$ such that $\mathcal{I}_{\widehat{I}}\neq\emptyset$. 
	Given the enlarged set of segments, i.e., $\mathcal{I}\cup\{I+1,\cdots,\widehat{I}\}$, according to  inequality \eqref{eq:app-lemma1} in Lemma \ref{lem:app1},  we  obtain \eqref{eq:app-lemma2}.
\end{proof}

Based on Lemma \ref{lem:app1}, we show that the dynamic regret of Algorithm \ref{alg:saddle-point} is upper-bounded in Theorem \ref{thm:regret}.
\begin{theorem}[Dynamic Regret]\label{thm:regret}
	Under Condition \ref{ass:parameters} and convex decision set $\mathcal{R}$, the dynamic regret of Algorithm \ref{alg:saddle-point} is upper-bounded by 
	\begin{multline}\label{eq:regret-bound}
		\text{Reg}_I \leq \frac{R^2\left(1+V_{\emptyset}\right)}{2\alpha}   + \frac{RV_{r}}{\alpha} + \frac{\alpha \overline{Q}^2 I}{2}\\+ \mathbbm{1}({\widetilde{\mathcal{I}}}\neq \emptyset)\frac{K R_{|\mathcal{R}|}}{d^{\emph{\text{min}}}} \left( \frac{3R^2}{2\alpha} +\frac{\alpha}{2} \overline{Q}^2\right).
	\end{multline}
\end{theorem}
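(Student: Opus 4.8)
\textbf{Proof proposal for Theorem~\ref{thm:regret}.}
The plan is to sum the per-segment regret bounds of Lemma~\ref{lem:app1} over all segments and show that the telescoping-type sum of the quadratic terms collapses, leaving only the boundary terms controlled by $V_{\emptyset}$ and $V_{r}$. First I would split the index set $\mathcal{I}$ into the segments covered by the auxiliary sets, i.e.\ $\bigcup_{i\in\mathcal{I}\cup\{I+1\}}\mathcal{I}_i$, and its complement $\widetilde{\mathcal{I}}$. For each $s\in\mathcal{I}_i$ I would invoke statement~(A) of Lemma~\ref{lem:app1}, and for each $s\in\widetilde{\mathcal{I}}$ I would invoke statement~(B). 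Summing statement~(B) over $\widetilde{\mathcal{I}}$ contributes $|\widetilde{\mathcal{I}}|\big(\tfrac{3R^2}{2\alpha}+\tfrac{\alpha}{2}\overline{Q}^2\big)$, and I expect $|\widetilde{\mathcal{I}}|$ to be bounded using the fact that a segment lands in $\widetilde{\mathcal{I}}$ only because its downloading period is so long that its parameters are never observed in time; the worst case downloading time $KR_{|\mathcal{R}|}\beta/d^{\text{min}}$ per empty-$\mathcal{I}_i$ event combined with \eqref{eq:assumption-1} should cap $|\widetilde{\mathcal{I}}|$ by roughly $\tfrac{KR_{|\mathcal{R}|}}{d^{\text{min}}}$ times an indicator, which matches the last term in \eqref{eq:regret-bound}.

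The core of the argument is the sum of statement~(A) over $s\in\mathcal{I}_i$ for each $i$. Since the right-hand side of \eqref{eq:app-lemma1} does not depend on $s$, summing over $s\in\mathcal{I}_i$ multiplies it by $|\mathcal{I}_i|$, giving a term $\tfrac{1}{2\alpha}|\mathcal{I}_i|\big(2R\lVert\boldsymbol{r}^*_{J_i}-\boldsymbol{r}^*_{J_i^{\dag}}\rVert + \lVert\boldsymbol{r}^o_{J_i}-\boldsymbol{r}^*_{J_i^{\dag}}\rVert^2 - \lVert\boldsymbol{r}^*_{J_i}-\boldsymbol{r}^o_i\rVert^2\big) + \tfrac{\alpha}{2}|\mathcal{I}_i|\overline{Q}^2$. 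Summing the $\alpha$-terms over all $i$ gives $\tfrac{\alpha}{2}\overline{Q}^2\sum_i|\mathcal{I}_i|\le\tfrac{\alpha}{2}\overline{Q}^2 I$ (each segment is counted at most once across all auxiliary sets). The $2R\lVert\boldsymbol{r}^*_{J_i}-\boldsymbol{r}^*_{J_i^{\dag}}\rVert$ contributions sum, after multiplying by $|\mathcal{I}_i|$, to at most $\tfrac{R}{\alpha}\sum_i|\mathcal{I}_i|\lVert\boldsymbol{r}^*_{J_i}-\boldsymbol{r}^*_{J_i^{\dag}}\rVert\le\tfrac{RV_r}{\alpha}$ by \eqref{eq:assumption-3}. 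This is exactly where Condition~\ref{ass:parameters} does its work, so I would make sure the definitions of $J_i^{\dag}$ and $h(i)$ line up with how $\boldsymbol{r}^o_i$ was actually generated.

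The main obstacle — and the step I would spend the most care on — is handling the quadratic terms $\lVert\boldsymbol{r}^o_{J_i}-\boldsymbol{r}^*_{J_i^{\dag}}\rVert^2 - \lVert\boldsymbol{r}^*_{J_i}-\boldsymbol{r}^o_i\rVert^2$ so that they telescope across the chain of dependencies. The key observation is that, by the update rule, $\boldsymbol{r}^o_i$ was produced from $\boldsymbol{r}^o_{J_i}$, and $\boldsymbol{r}^o_{J_i}$ in turn was produced from $\boldsymbol{r}^o_{J_i^{\dag}}$ (with the convention from the footnote covering the empty-$\mathcal{I}$ runs). Following this chain back to the initial decision $\boldsymbol{r}_0$, the negative term $-\lVert\boldsymbol{r}^*_{J_i}-\boldsymbol{r}^o_i\rVert^2$ from one link should cancel (up to a shift handled by the $2R\lVert\cdot\rVert$ slack already accounted for) against a positive term at the next link, so the whole collection collapses to a single leftover $\lVert\boldsymbol{r}^o_0-\text{(something)}\rVert^2\le R^2$ term, plus one extra $R^2$ appearing for each of the at most $V_{\emptyset}$ places where the chain is broken by an empty auxiliary set; this yields $\tfrac{R^2(1+V_{\emptyset})}{2\alpha}$. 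I would argue this telescoping carefully by indexing the non-empty auxiliary sets in order and tracking which $\boldsymbol{r}^o$ feeds which, rather than by a brute-force rearrangement. Once the three groups of terms are assembled — $\tfrac{R^2(1+V_{\emptyset})}{2\alpha}$, $\tfrac{RV_r}{\alpha}$, $\tfrac{\alpha\overline{Q}^2 I}{2}$, and the $\widetilde{\mathcal{I}}$ correction — the bound \eqref{eq:regret-bound} follows directly.
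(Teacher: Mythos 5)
Your proposal follows essentially the same route as the paper's proof: sum Lemma~\ref{lem:app1}(A) over each $\mathcal{I}_i$ (picking up the factor $|\mathcal{I}_i|$), control the $2R\lVert\boldsymbol{r}^*_{J_i}-\boldsymbol{r}^*_{J_i^{\dag}}\rVert$ terms with \eqref{eq:assumption-3} to get $RV_r/\alpha$, telescope the quadratic terms along the chain of updates (using that $J_i^{\dag}=J_{J_i}$, so each positive $\lVert\boldsymbol{r}^o_{J_i}-\boldsymbol{r}^*_{J_i^{\dag}}\rVert^2$ matches the negative term from the previous link, with at most $V_{\emptyset}$ breaks) to obtain $R^2(1+V_{\emptyset})/(2\alpha)$, and bound $|\widetilde{\mathcal{I}}|$ by $KR_{|\mathcal{R}|}/d^{\text{min}}$ for the Lemma~\ref{lem:app1}(B) contribution. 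The decomposition, the use of Condition~\ref{ass:parameters}, and all four resulting terms coincide with the paper's argument, so the proposal is correct and not a genuinely different proof.
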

\begin{proof}
	Based on Lemma \ref{lem:app1}, taking the summation over all the segments $s\in\mathcal{I}_i$ for all $i\in\mathcal{I}\cup\{I+1\}$, we have 
	\begin{align}\label{eq:app-regret1}
		\begin{split}
			& ~\sum_{i\in\mathcal{I}\cup\{I+1\}}\sum_{s\in\mathcal{I}_i}\left(\widetilde{Q}_{s}(\boldsymbol{r}_s^o)  - \widetilde{Q}_s(\boldsymbol{r}^*_s)\right)\\ 
			{\leq} & ~ \sum_{i\in\mathcal{I}\cup\{I+1\}} \frac{1}{2\alpha}|\mathcal{I}_i| \big(2R \lVert\boldsymbol{r}_{J_i}^*- \boldsymbol{r}_{J_i^{\dag}}^*\rVert \\
			&+ \lVert\boldsymbol{r}_{J_i}^o - \boldsymbol{r}_{J_i^{\dag}}^*\rVert^2-\lVert \boldsymbol{r}_{J_i}^* -\boldsymbol{r}_i^o\rVert^2\big)+\frac{\alpha \overline{Q}^2 I}{2} \\
			\overset{\text{(a)}}{\leq} &~\sum_{i\in\mathcal{I}\cup\{I+1\}}\frac{1}{2\alpha}|\mathcal{I}_i| \big(\lVert\boldsymbol{r}_{J_i}^o -  \boldsymbol{r}_{J_i^{\dagger}}^*\rVert^2-\lVert \boldsymbol{r}_{J_i}^* -\boldsymbol{r}_i^o\rVert^2\big) \\
			&~ + \frac{RV_{r}}{\alpha} + \frac{\alpha \overline{Q}^2 I}{2} \\
			\overset{\text{(b)}}{\leq} & ~
			\frac{R^2\left(1+V_{\emptyset}\right)}{2\alpha} + \frac{RV_{r}}{\alpha} + \frac{\alpha \overline{Q}^2 I}{2} .
		\end{split}
	\end{align}
	Inequality (a) is due to the condition in \eqref{eq:assumption-3}. Inequality (b) holds because the following inequality holds:
	\begin{align}\label{eq:app-k8}
		\begin{split}
			&\sum_{i\in\mathcal{I}\cup\{I+1\}} |\mathcal{I}_i|\big(\lVert\boldsymbol{r}_{J_i}^o - \boldsymbol{r}_{J_i^{\dag}}^*\rVert^2-\lVert \boldsymbol{r}_{J_i}^* -\boldsymbol{r}_i^o\rVert^2\big)\\
			\leq & ~\sum_{i\in\mathcal{I}\cup\{I+1\}}\left( 1+ \left[|\mathcal{I}_i| - 1\right]^+ \right)R^2 
			\leq  (1+V_{\emptyset})R^2.
		\end{split}
	\end{align}
	Taking the summation over all the segments $s\in\widetilde{\mathcal{I}}$, we have 
	\begin{align}\label{eq:app-regret2}
		\begin{split}
			&~\sum_{s\in\widetilde{\mathcal{I}}}\left(\widetilde{Q}_s(\boldsymbol{r}^*_s) -\widetilde{Q}_{s}(\boldsymbol{r}_s^o) \right)
			\overset{\text{(c)}}{\leq} 
			\frac{KR_{|\mathcal{R}|}}{d^{\text{min}}} \left( \frac{3R^2}{2\alpha} +\frac{\alpha}{2}\overline{Q}^2\right),
		\end{split}
	\end{align}
	where (c) holds because $(K {R_{|\mathcal{R}|}  \beta})/({ d^{\text{min}} } {\beta}) = {KR_{|\mathcal{R}|} }/{d^{\text{min}} }$ is the upper-bound of the cardinality of $\widetilde{\mathcal{I}}$. By summing up  \eqref{eq:app-regret1} and \eqref{eq:app-regret2}, we can obtain the regret bound \eqref{eq:regret-bound}.
\end{proof}

\subsubsection{Sublinear Dynamic Regret}
Based on Theorem \ref{thm:regret}, the dynamic regret of Algorithm \ref{alg:saddle-point} is sublinear under particular setting  of $\alpha$.
\begin{corollary}[Sublinear Dynamic Regret]\label{coro:regret}
	Under Condition \ref{ass:parameters} and convex decision set $\mathcal{R}$, by  setting $\alpha = \alpha_0I^{-{1}/{\gamma}}$, for any $\gamma\in(1,\infty)$, the dynamic regret of Algorithm \ref{alg:saddle-point} is sublinear, i.e., $\lim_{I\rightarrow\infty} \text{Reg}_{I} / I = 0$.
\end{corollary}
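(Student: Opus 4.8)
The plan is to substitute the prescribed stepsize $\alpha=\alpha_0 I^{-1/\gamma}$ directly into the dynamic-regret bound \eqref{eq:regret-bound} of Theorem \ref{thm:regret}, then divide by $I$ and let $I\to\infty$. First I would regroup the four terms on the right-hand side of \eqref{eq:regret-bound} by their dependence on $\alpha$: apart from the $\frac{\alpha}{2}\overline{Q}^2$ piece sitting inside the indicator term (which is $O(\alpha)$ and hence negligible), every term is of the form $C_1/\alpha$ or $C_2\,\alpha I$, where $C_1$ collects $\tfrac{R^2(1+V_{\emptyset})}{2}+R V_r+\mathbbm{1}(\widetilde{\mathcal{I}}\neq\emptyset)\,\tfrac{K R_{|\mathcal{R}|}}{d^{\text{min}}}\cdot\tfrac{3R^2}{2}$ and $C_2=\tfrac{\overline{Q}^2}{2}$. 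The essential observation, which I would state explicitly, is that every quantity entering $C_1$ and $C_2$ — namely $R$, $\overline{Q}$, $K$, $R_{|\mathcal{R}|}$, $d^{\text{min}}$, and crucially $V_{\emptyset}$ and $V_r$ — is a constant that does not depend on $I$; for $V_{\emptyset}$ and $V_r$ this is exactly the content of Condition \ref{ass:parameters}, which posits such constants valid for every $I$, and the indicator is trivially bounded by $1$. Thus $\text{Reg}_I\le C_1/\alpha+C_2\,\alpha I+O(\alpha)$ with $C_1,C_2$ fixed.

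With this in hand the computation is immediate: plugging in $\alpha=\alpha_0 I^{-1/\gamma}$ gives $1/\alpha=\alpha_0^{-1}I^{1/\gamma}$ and $\alpha I=\alpha_0 I^{1-1/\gamma}$, so $\text{Reg}_I\le C_1\alpha_0^{-1}I^{1/\gamma}+C_2\alpha_0 I^{1-1/\gamma}+O(I^{-1/\gamma})$. Dividing by $I$ yields $\text{Reg}_I/I\le C_1\alpha_0^{-1}I^{1/\gamma-1}+C_2\alpha_0 I^{-1/\gamma}+O(I^{-1-1/\gamma})$. Since $\gamma\in(1,\infty)$ we have $1/\gamma\in(0,1)$, hence $1/\gamma-1<0$ and $-1/\gamma<0$, so each exponent is strictly negative and every term tends to $0$ as $I\to\infty$. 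Therefore $\lim_{I\to\infty}\text{Reg}_I/I=0$, which is the claimed sublinearity.

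I do not anticipate a substantive obstacle here: once Theorem \ref{thm:regret} is available, the corollary is a one-line asymptotic argument. The only point that deserves emphasis in the write-up is that the bound in Theorem \ref{thm:regret} is genuinely of the form $(\text{const})/\alpha+(\text{const})\cdot\alpha I$ with $I$-independent constants — this is precisely what makes the standard $\alpha\propto I^{-1/\gamma}$ tuning work, and it is exactly why Condition \ref{ass:parameters} was imposed with $V_{\emptyset},V_r$ fixed in advance rather than allowed to grow with $I$. (As a side remark, balancing the two dominant terms would give the optimal exponent $\gamma=2$, i.e. $\alpha\propto I^{-1/2}$ and $\text{Reg}_I=O(\sqrt{I})$; but the statement only requires sublinearity for every $\gamma>1$, which the argument above delivers.)
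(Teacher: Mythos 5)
Your proof is correct and follows essentially the same route as the paper: substitute $\alpha=\alpha_0 I^{-1/\gamma}$ into the bound of Theorem~\ref{thm:regret}, divide by $I$, and observe that every resulting exponent of $I$ is strictly negative since $1/\gamma\in(0,1)$. The only cosmetic omission is the matching lower bound $\text{Reg}_I/I\ge 0$ (immediate because each $\boldsymbol{r}_i^*$ is optimal for its per-segment problem), which the paper states explicitly to turn the $\limsup$ into a genuine limit.
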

\begin{proof}
	For any $\gamma\in(1,\infty)$, 
	\begin{multline}
		\lim_{I\rightarrow\infty}\text{Reg}_I/I \leq \lim_{I\rightarrow\infty} \frac{R^2\left(1\!+\!V_{\emptyset}\right)}{2\alpha_0I^{-\frac{1}{\gamma}}I } + \frac{RV_{r}}{\alpha_0I^{-\frac{1}{\gamma}} I} +\frac{\alpha_0I^{-\frac{1}{\gamma}} \overline{Q}^2}{2} \\+ \mathbbm{1}({\widetilde{\mathcal{I}}}\neq \emptyset)\frac{K R_{|\mathcal{R}|}}{d^{\text{min}}} \left( \frac{3R^2}{2\alpha_0I^{-\frac{1}{\gamma}} I} +\frac{\alpha_0I^{-\frac{1}{\gamma}} }{2I} \overline{Q}^2\right) = 0,
	\end{multline}
	and $\lim_{I\rightarrow\infty}\text{Reg}_I/I \geq 0$.
\end{proof}

Corollary \ref{coro:regret} implies that as the number of segments increases, the performance of the proposed online algorithm approaches  the performance under the optimal solutions to the per-segment problems (OPT-SEGMENT-i) for $i\in\mathcal{I}$.

\subsection{Algorithm Modification}


Algorithm \ref{alg:saddle-point} is capable of learning the user's FoV preference and time-varying downloading capacity in real time, and it is proven to have sublinear dynamic regret under Condition \ref{coro:regret}. In practice, however, the downloading capacity may sometimes have significant increase or decrease within a short time (with statistics from a real-world dataset to be shown in Section \ref{subsec:data}), under which Condition  \ref{coro:regret} may sometimes not be satisfied, and this may induce sudden changes to the bitrate decisions.

To alleviate the impact of the variation of the downloading capacity, we modify Algorithm \ref{alg:saddle-point} by restricting the increase or decrease of the bitrate of each tile to be at most one level at each time. Specifically, after computing $\boldsymbol{r}^o_i $ for segment $i\in\mathcal{I}$ according to \eqref{eq:online-update}, we compute  a modified bitrate decision $\tilde{\boldsymbol{r}}^o_i = (\tilde{{r}}^o_{i,k}, k\in\mathcal{K})$, and use it as the bitrate choice of the online algorithm, i.e., assign the value of vector $\tilde{\boldsymbol{r}}^o_i $ to vector ${\boldsymbol{r}}^o_i  $. The bitrate vector $\tilde{\boldsymbol{r}}^o_i$ is defined as follows. Let $\tilde{l}^o_{i,k}$ and ${l}^o_{i,k}$ denote the bitrate level of the bitrate vectors $\tilde{r}^o_{i,k}$ and ${r}^o_{i,k}$, i.e., $R_{\tilde{l}^o_{i,k}} = \tilde{r}^o_{i,k}$ and $R_{{l}^o_{i,k}} = {r}^o_{i,k}$, respectively. Given bitrate vectors ${\boldsymbol{r}}^o_i$ and $\tilde{\boldsymbol{r}}^o_{i-1}$, vector $\tilde{\boldsymbol{r}}^o_i$  is computed as follows: for all $i\in\mathcal{I}$ and $k\in\mathcal{K}$,
\begin{equation}\label{eq:modify}
	\tilde{r}^o_{i,k} = \left\{
	\begin{array}{ll}
		R_{\tilde{l}^o_{i-1,k} + 1},& {l}^o_{i,k} > \tilde{l}^o_{i-1,k} + 1, \\
		R_{\tilde{l}^o_{i-1,k} - 1},&{l}^o_{i,k} < \tilde{l}^o_{i-1,k} - 1,\\
		{r}^o_{i,k},& \text{otherwise}.\\
	\end{array}
	\right.
\end{equation}
Intuitively, when the bitrate decision of a tile computed according to \eqref{eq:online-update} is increased (or decreased) by more than one level when compared with the bitrate of the tile of the previous segment, we restrict the increase (or decrease) of the bitrate to be one level so as to avoid the aggressive bitrate change caused by the significant variation of the downloading capacity.


\section{Performance Evaluation}\label{sec:simulation}
In this section, we evaluate the performance of our proposed OBS360 algorithm. In the simulations, we use two open datasets to simulate the 360 degree video streaming scenarios: we use the dataset in \cite{van2016http} to simulate the users' downloading capacities, and use the dataset in \cite{Knorr2018data} to simulate users' FoVs. 
The coefficients are as follows: $l^{\textsc{rb}} =0.5$, $l^{\textsc{bd-e}}=0.1$, and $l^{\textsc{bd-a}}=0.1$. In addition,  initial buffer occupancy $b^{\textsc{ini}}$ is set to two seconds,  segment length $\beta$ is set to one second, and parameter $\alpha$ is set to one. 

In the following, we first introduce the two open datasets from \cite{van2016http} and \cite{Knorr2018data}. Then, we present the simulation results, including the video streaming instance using the proposed OBS360 algorithm, the comparison between OBS360 algorithm and the offline optimal performance, and the comparison between OBS360 algorithm and benchmark methods.

\subsection{Datasets}\label{subsec:data}

The dataset in \cite{van2016http} contains the bandwidth measurement results in 4G networks  in the city of Ghent, Belgium. There are a total of 40 logs, which are collected on various transportation, such as on foot, bicycle, bus, and train. Each log has a duration ranging from 166 to 758 seconds. In our simulation, we select one sample within each second to compute the downloading capacity of the second.  Fig. \ref{fig:down-data} (a) shows the cumulative distribution function (CDF) of the downloading capacity of these selected samples. As shown in the figure, $50\%$ of the downloading capacities are below 33 Mbps, and $80\%$ of them are below 48 Mbps. Fig. \ref{fig:down-data} (b) shows the CDF of the absolute downloading capacity difference between adjacent selected samples. As shown in the figure, more than $20\%$ of the adjacent samples have a downloading capacity difference that is larger than $15$ Mbps. This implies that the downloading capacity can have significant sudden changes, which makes it challenging for a video player to decide the bitrate of each tile for VA360 video streaming services.
\begin{figure}[t]
	\centering
	\includegraphics[height=3.3cm]{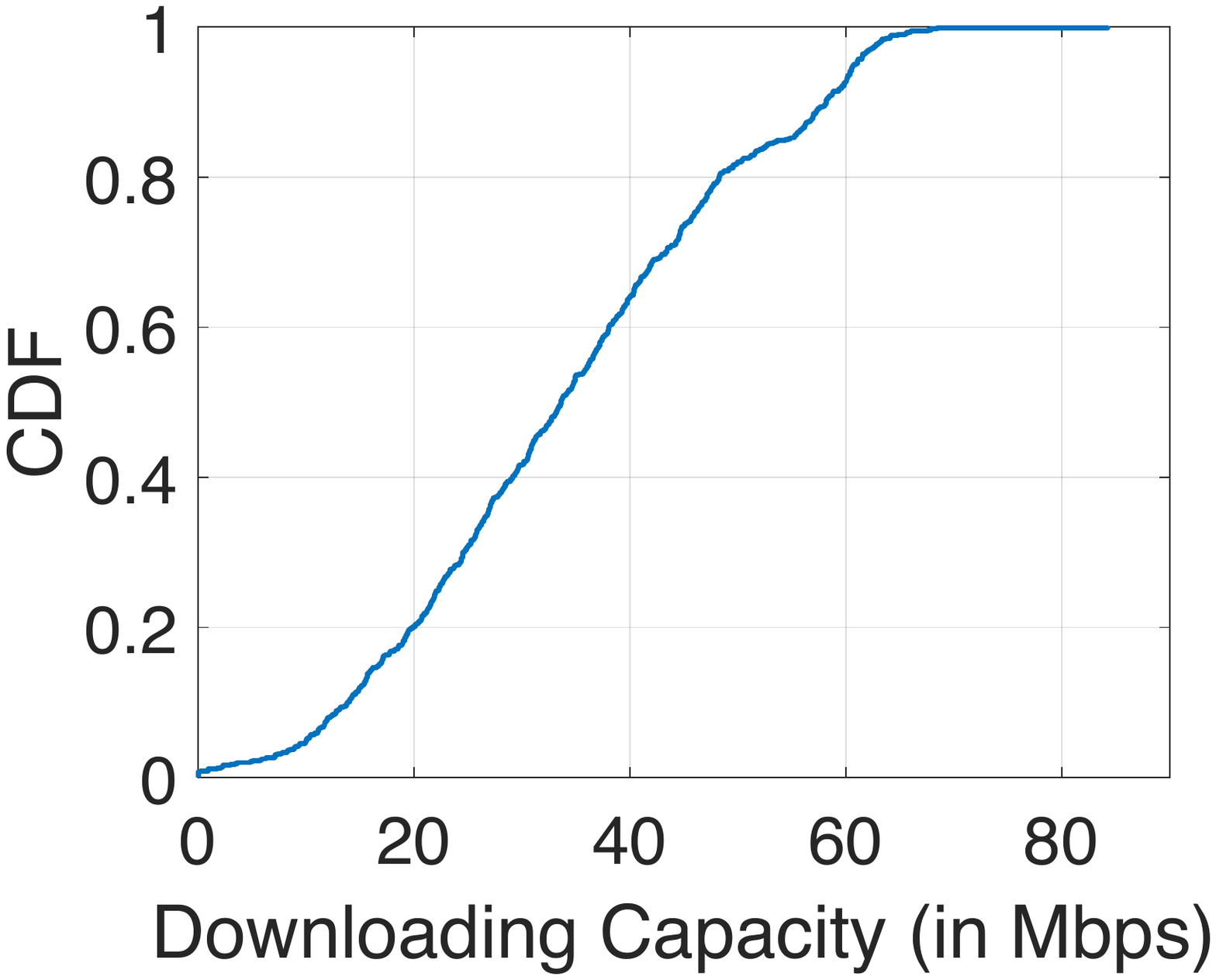}
	\includegraphics[height=3.3cm]{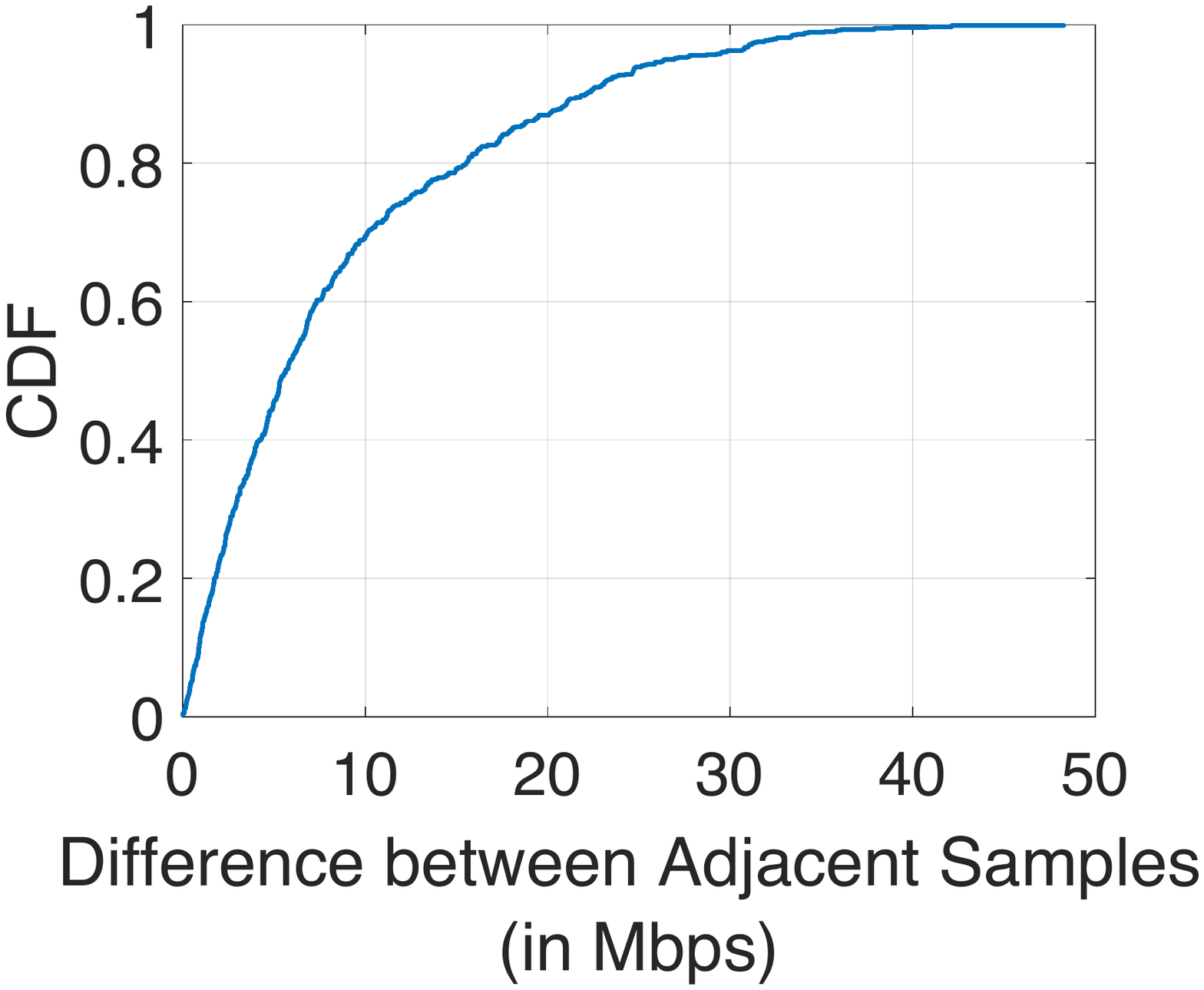}\\
	(a)\qquad\qquad\qquad\qquad\qquad(b)
	\caption{Dataset statistics: (a) downloading capacity; (b) absolute downloading capacity difference between adjacent samples.}\label{fig:down-data}
\end{figure}
\begin{figure}[t]
	\centering
	\includegraphics[height=2.2cm]{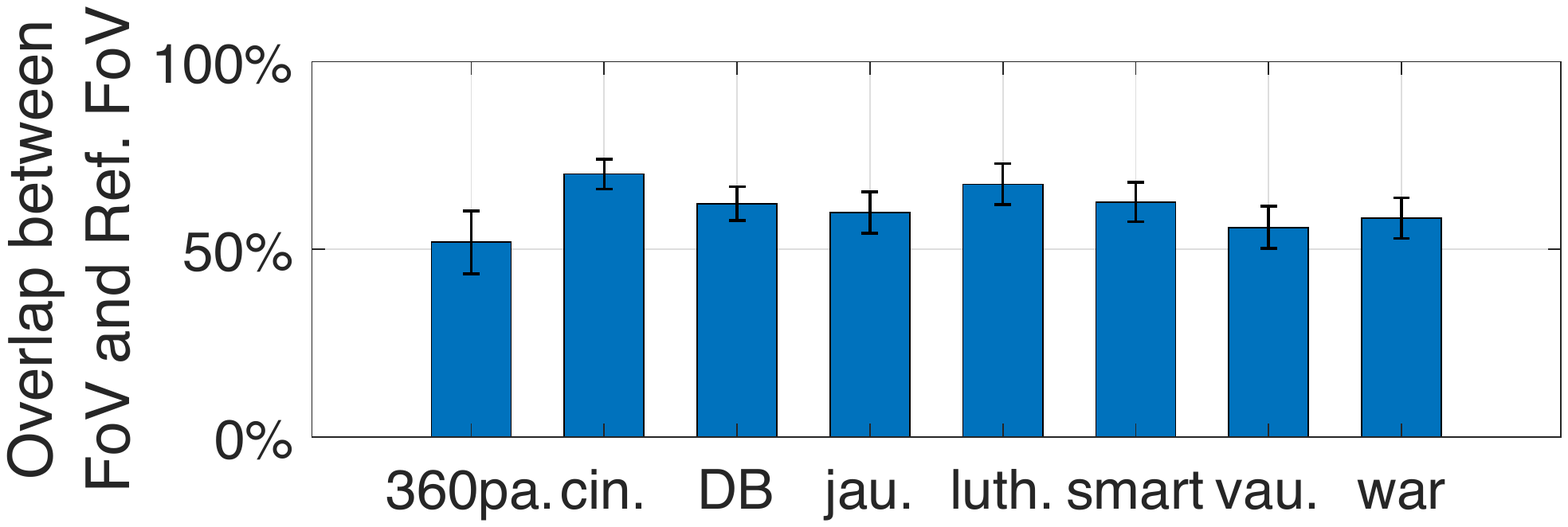}
	\caption{Overlap percentage between the users' FoVs and the reference FoV. The bars and the error bars give the average values and the variance across users, respectively.}\label{fig:video-data}
\end{figure}

The dataset in \cite{Knorr2018data} contains the traces of 8 videos. For each video, there is one recommended viewing trace (marked by professional filemakers) and 20 actual viewing  traces from 20 users. 
In the dataset, either the recommended trace or each user's trace of watching a video is represented by a viewing degree trace $((p_1,y_1),(p_2,y_2),\ldots,(p_I,y_I))$, where $p_i\in[-90^{\circ},90^{\circ}]$ and $y_i\in[-180^{\circ},180^{\circ}]$ are the pitch (vertical degree)  and yaw (horizontal degree) of the corresponding viewport of segment $i=1,2,\ldots,I$, respectively. The actual viewing degree trace is then transferred to a viewing FoV trace $\boldsymbol{\omega}=(\omega_{i,k},i\in\mathcal{I},k\in\mathcal{K})$ according to the definition of FoV in Section \ref{subsec:qoe}, with the recommended viewing FoV as the reference FoV. Fig. \ref{fig:video-data} shows the average overlap between the users' FoVs and the reference FoV of different videos, where the x-axis corresponds to the titles or title abbreviations of the videos. As shown in Fig. \ref{fig:video-data}, these videos have different average overlap percentages, while most of the percentages are around $60\%$. 

\subsection{Streaming Instance Using OBS360 Algorithm}\label{subsec:instance}
\begin{figure}
	\centering
	~\includegraphics[height=2.8cm]{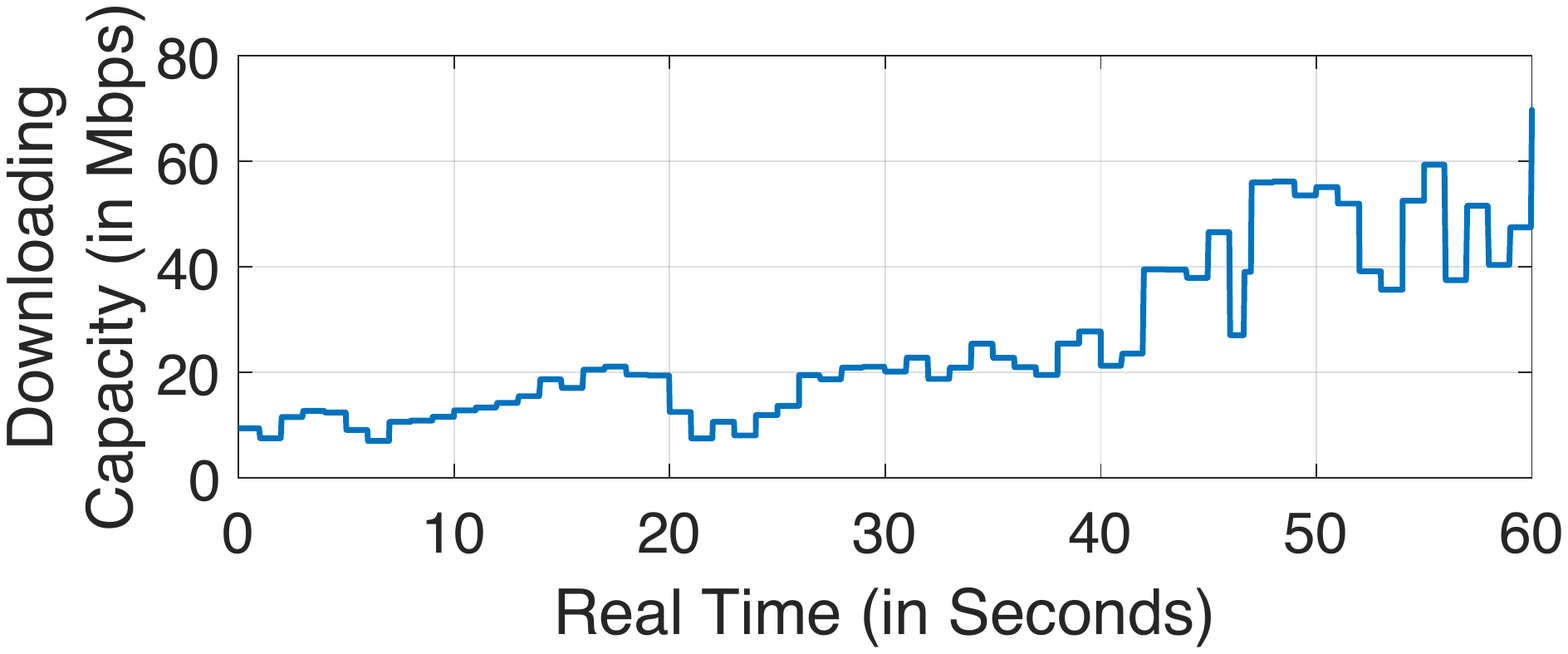}\\
	~~~~(a)\\
	\includegraphics[height=2.8cm]{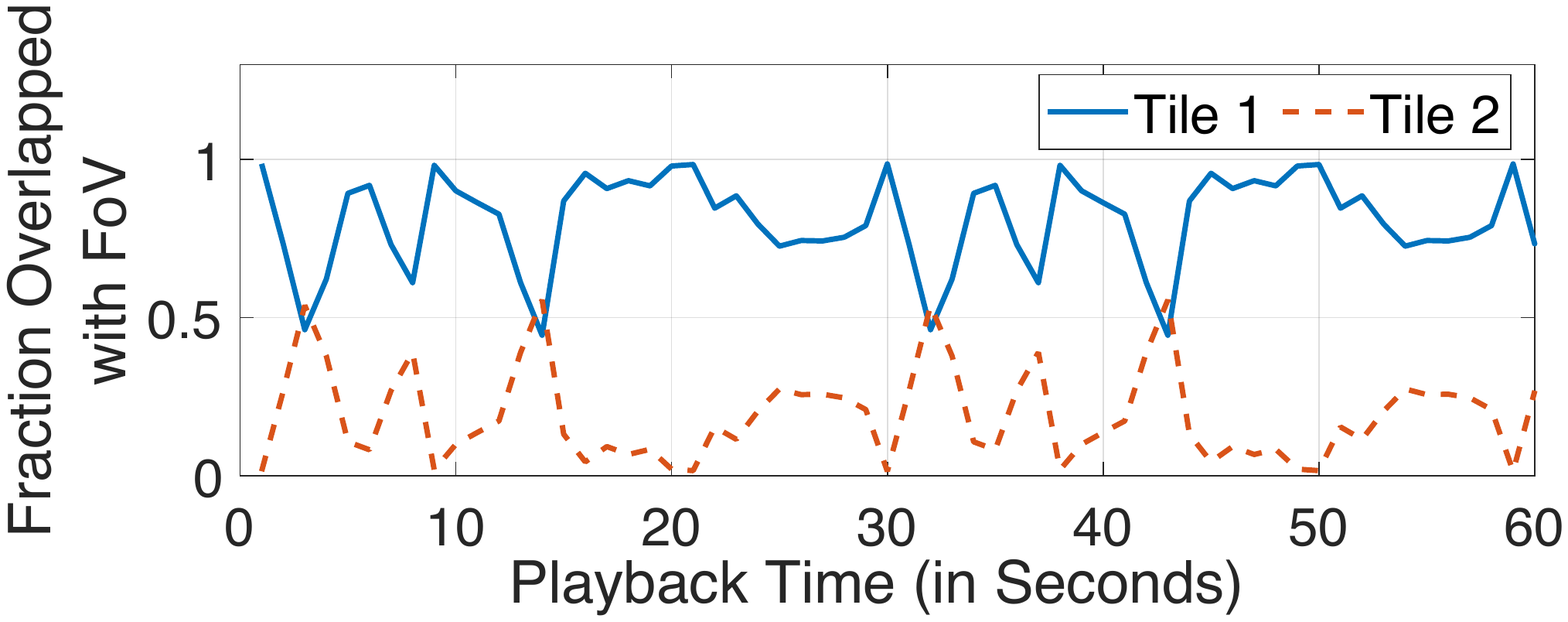}\\
	~~~~(b)\\
	~~~\includegraphics[height=2.7cm]{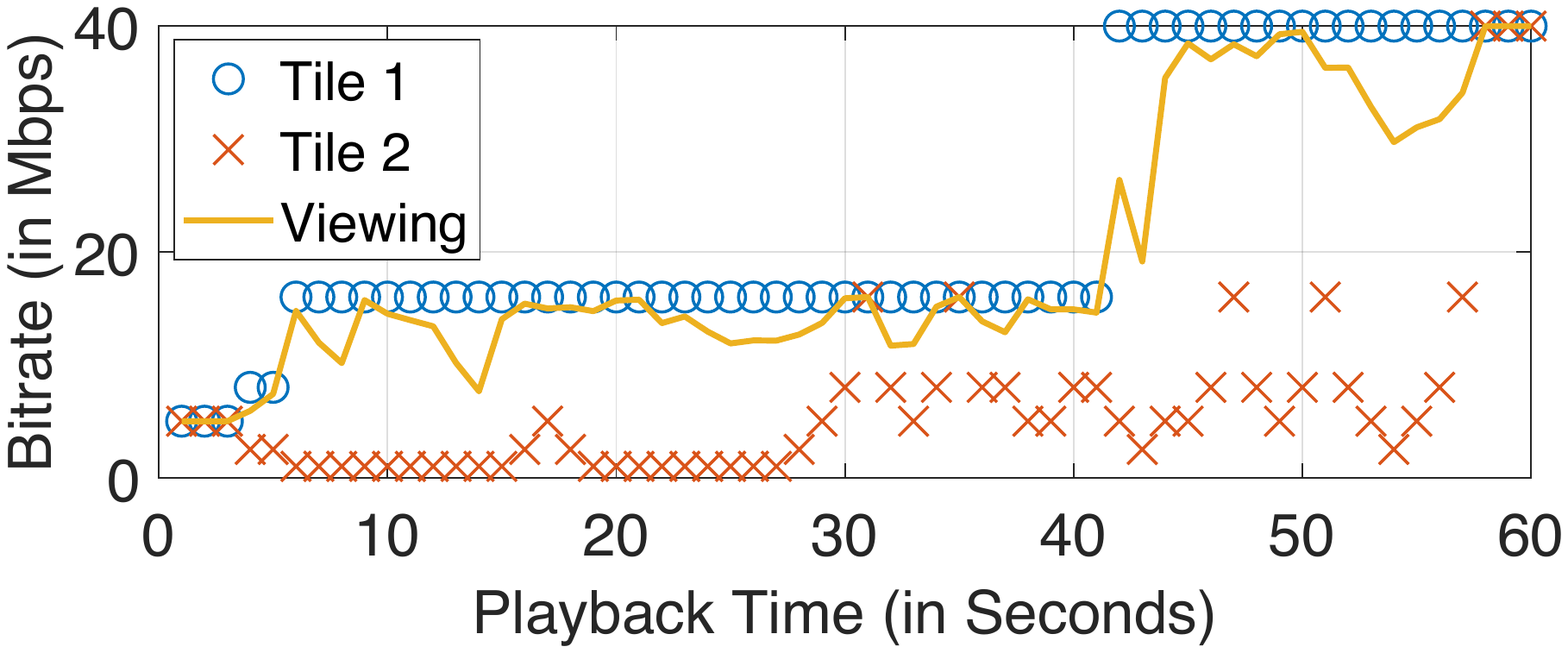}\\
	~~~~(c)\vspace{1mm}\\
	\includegraphics[height=2.6cm]{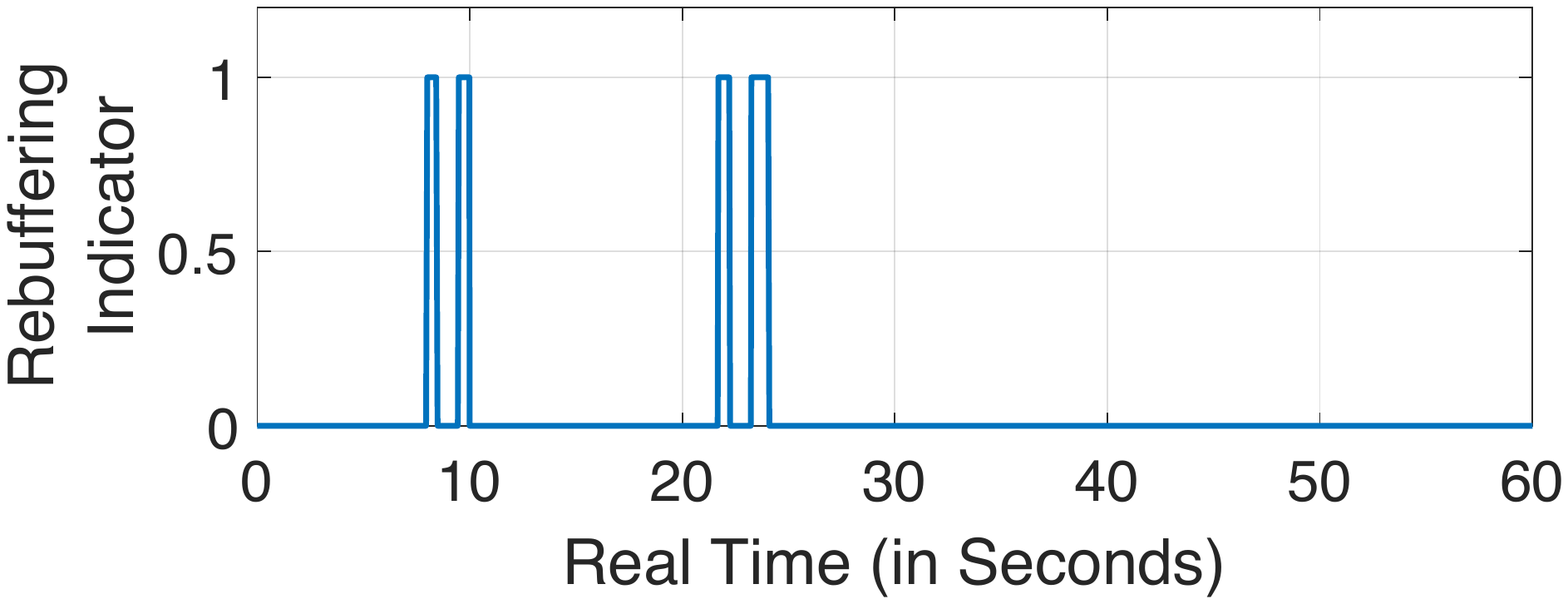}~~~\\
	~~~~(d)\\
	\caption{A video streaming instance with OBS360 algorithm: (a) downloading capacity; (b) the fraction of the tile overlapped with FoV; (c) tile bitrate and viewing bitrate; (d) rebuffering indicator.}\label{fig:instance}
\end{figure}
To visualize the bitrate selection of the proposed OBS360 algorithm, in Fig. \ref{fig:instance}, we show a video streaming scheduling instance under a downloading capacity trace from the dataset in \cite{van2016http} and a FoV trace from the dataset in \cite{Knorr2018data}. In this simulation, we consider a simplified setting of two tiles placed in a one by two grid. This simplification is for illustrating the bitrate selection of each tile and obtaining intuitions from the video streaming scheduling. We consider each user can see half of the whole view at any time, and the available bitrate set for each tile is  $\mathcal{R}=\{1, 2.5, 5, 8, 16, 40\}$ (Mbps). As a result, if all the tiles for a segment have the same bitrate from set $\mathcal{R}$, the viewing bitrate of the segment is in set $\{1, 2.5, 5, 8, 16, 40\}$ (Mbps), as the recommended bitrate set in YouTube \cite{youtube-dash}. The initial bitrate of each tile is set to be 5 Mbps. Note that such a setting is for demonstration, and a more realistic case with 16 tiles, as in \cite{Xiao2018-BAS360,Qian2018-Flare}, will be evaluated in Section \ref{subsec:comparison}.

Fig. \ref{fig:instance} (a) shows a downloading capacity trace, where the difference between adjacent capacity samples can be up to 20 Mbps. This trace is from \cite{van2016http} and is modified to have an increasing trend across the real time for the convenience of performance demonstration. Fig. \ref{fig:instance} (b) shows the user's FoV trace. This FoV varies dramatically across the playback time, and majority parts of the FoV are on tile 1 at most of the playback time. Fig. \ref{fig:instance} (c) shows the bitrate selection result of the proposed OBS360 algorithm. As shown in the figure, the bitrates of both tiles are initialized as 5 Mbps, so the viewing bitrate is 5 Mbps at the beginning of the streaming. Through learning, the algorithm gradually increases the bitrate of tile 1 from 5 Mbps to 16 Mbps at around 6 seconds (playback time), and hence the user's viewing bitrate significantly increases to around 16 Mbps (approximately a 2K video \cite{youtube-dash}). After around 42 seconds (real time), the downloading capacity significantly increases to around 40 Mbps, so the bitrate of tile 1 further increases, and the user's viewing bitrate increases to around 40 Mbps (approximately a 4K video \cite{youtube-dash}). In terms of the rebufferring, it happens at around 8, 9, 22, and 23 seconds (real time), which is mainly due to the sudden decrease of the downloading capacity at around 7 and 21 seconds. Note that despite the significant varying of the downloading capacity in the trace, each of the rebuffering lasts only a few hundred milliseconds, which is hardly noticeable by the user \cite{de2012quantifying}.

\subsection{OBS360 and Offline Optimal Solution}\label{subsec:comparison-offf}
\begin{figure}[t]
	\centering
	~~\includegraphics[height=2.4cm]{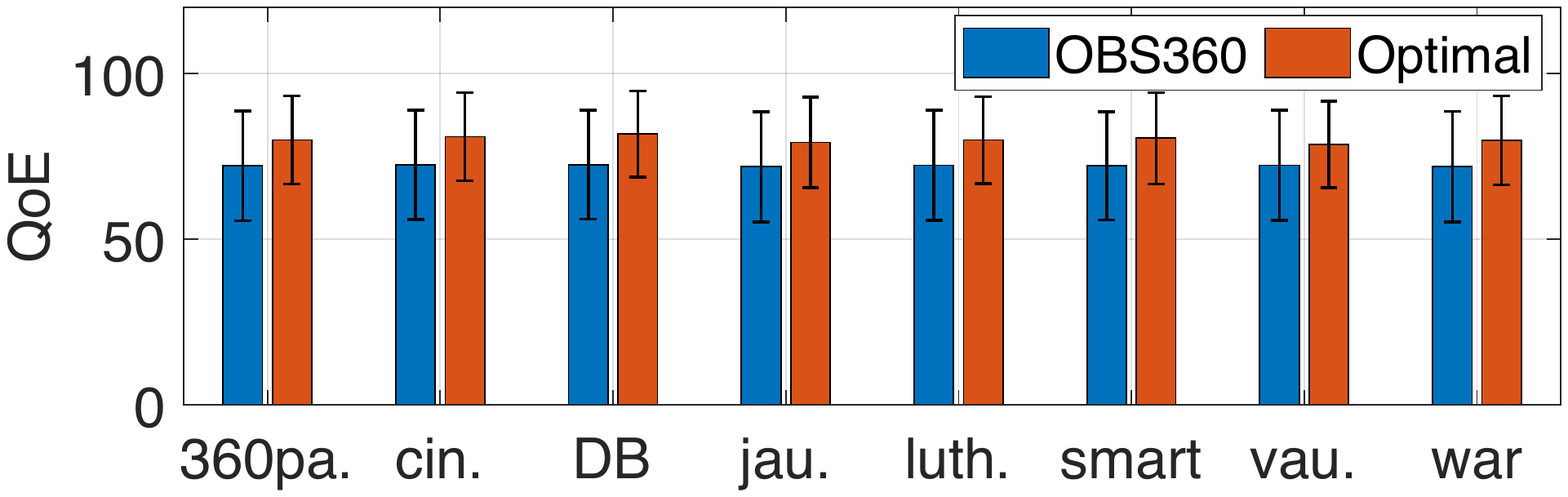}\\
	~~~~(a)\\
	\includegraphics[height=2.55cm]{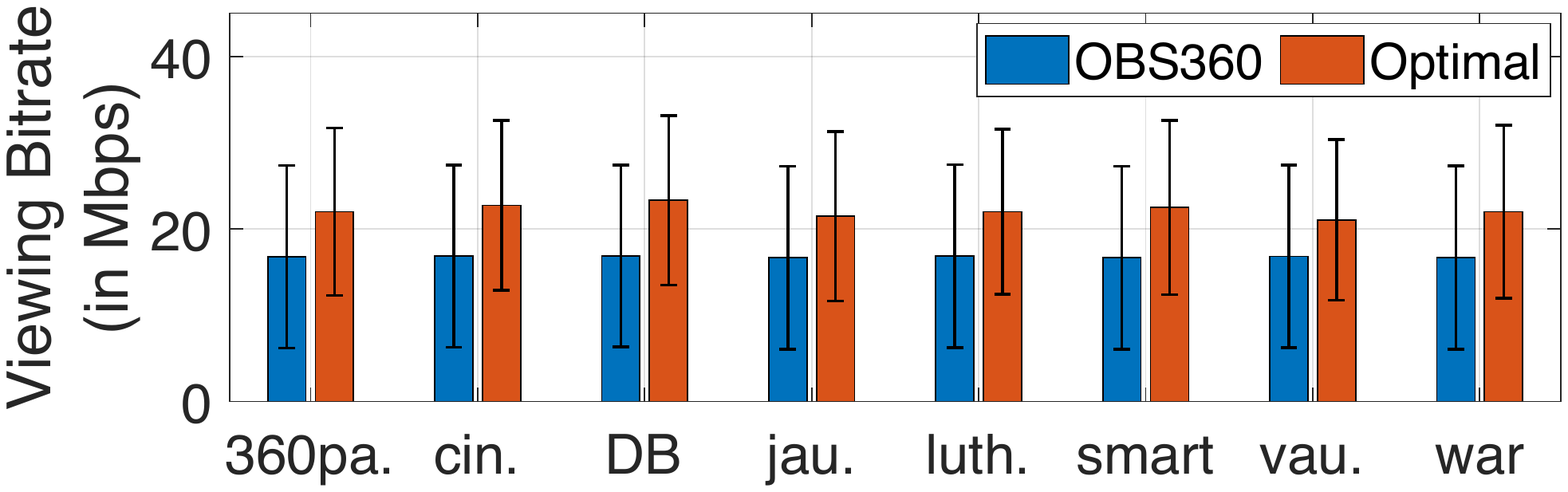}\\
	~~~~(b)\\
	\includegraphics[height=2.7cm]{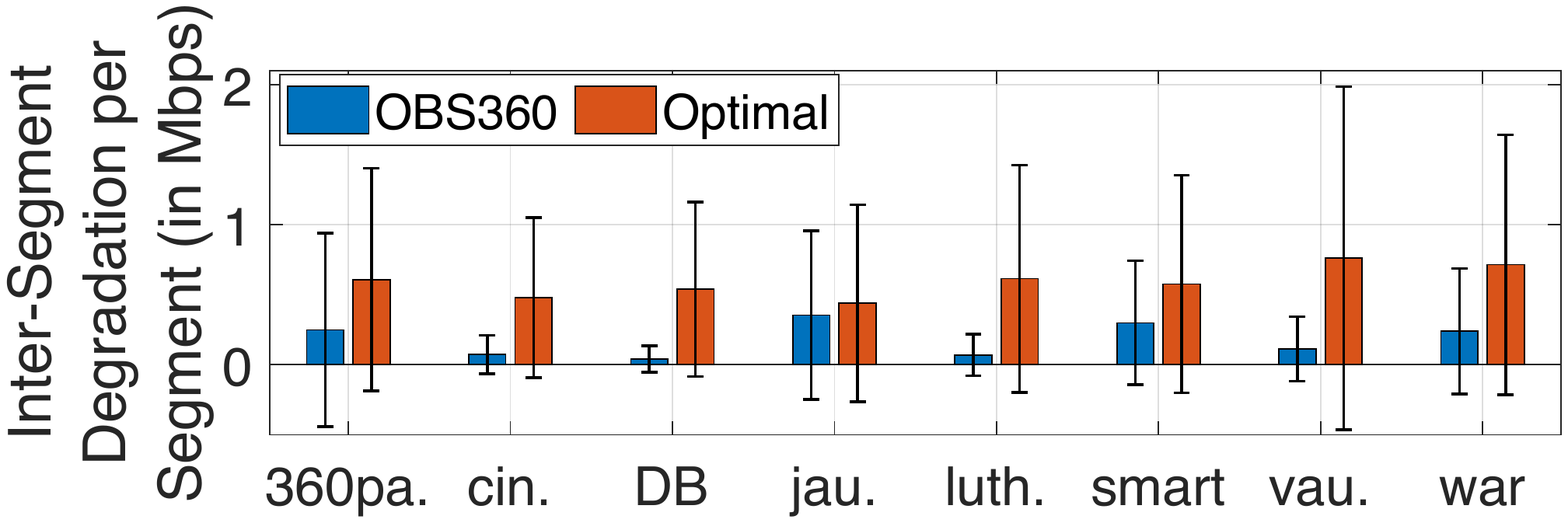}~~~\\
	~~~~(c)\\
	\includegraphics[height=2.65cm]{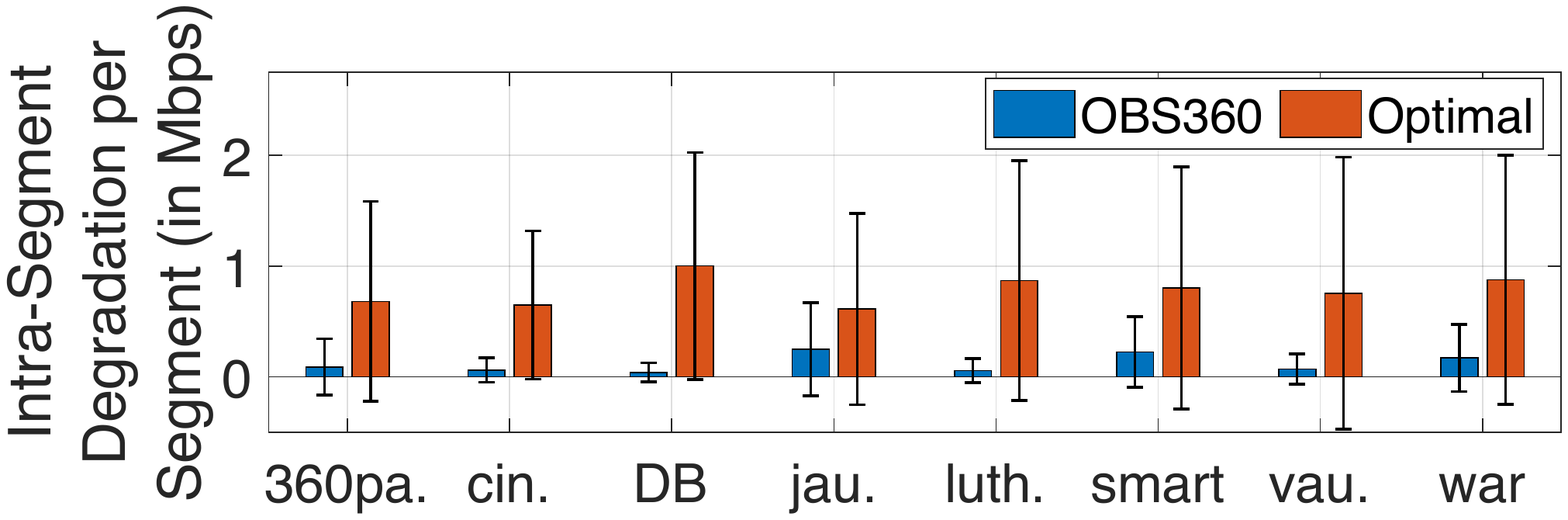}~~~\\
	~~~~(d)\\
	\includegraphics[height=2.55cm]{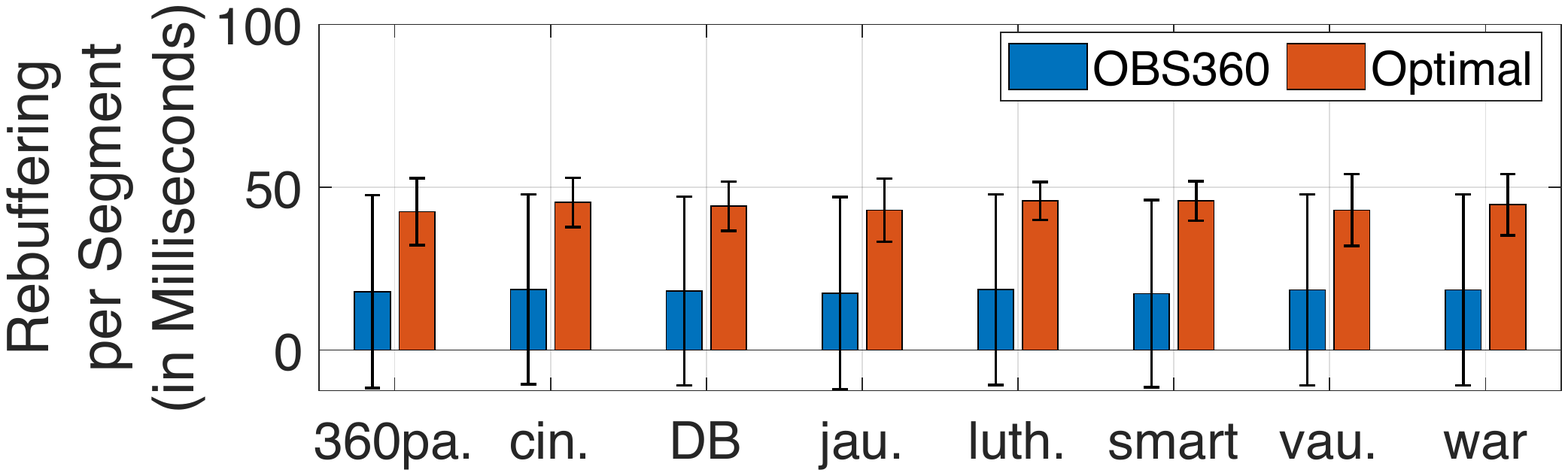}~~~~~~~~~~\\
	~~~~(e)\\
	\caption{{Comparisons between OBS360 and offline optimal solution: (a) QoE; (b)  viewing bitrate; (c) inter-segment degradation; (d)  intra-segment degradation; (e) rebuffering.}}\label{fig:comp-off}
\end{figure}

We compare our proposed OBS360 algorithm with the offline optimal solution, i.e., the optimal solution to problem \eqref{eq:optimization} when the user's FoV and downloading capacity are known beforehand. 
Due to the computational complexity of computing the offline optimal solution, we consider the simplified VA360 video streaming setting as in Section \ref{subsec:instance}. The simulations are performed based on the traces of all eight videos in the dataset from \cite{Knorr2018data}. For each video, we consider 20 users, each corresponding to a downloading capacity trace from the dataset in \cite{van2016http}  and a FoV trace from the dataset in \cite{Knorr2018data}. The performance is shown in Fig. \ref{fig:comp-off}. 
The bars show the corresponding average values over the users with different FoVs and downloading capacities, and the error bars show the corresponding variance across the users.

As shown in Fig. \ref{fig:comp-off} (a), the proposed OBS360 algorithm achieves $90.1\%$ of the user's QoE of the offline optimal solution on average. Specifically, the proposed algorithm achieves a viewing bitrate of around 16 Mbps (approximately a 2K video \cite{youtube-dash}), which is $75.8\%$ of the viewing bitrate of the offline optimal performance, as shown in Fig. \ref{fig:comp-off} (b). In addition, it achieves a lower inter-segment degradation, a lower intra-segment degradation, and a lower rebuffering than the offline optimal solution, as shown in Figs. \ref{fig:comp-off} (c)-(e). Intuitively, when compared with the offline optimal performance, although the proposed OBS360 algorithm has a lower viewing bitrate, it avoids frequent rebuffering and bitrate degradation.

\subsection{OBS360 and Benchmark Methods}\label{subsec:comparison}

\begin{figure}[t]
	\centering
	\includegraphics[height=2.35cm]{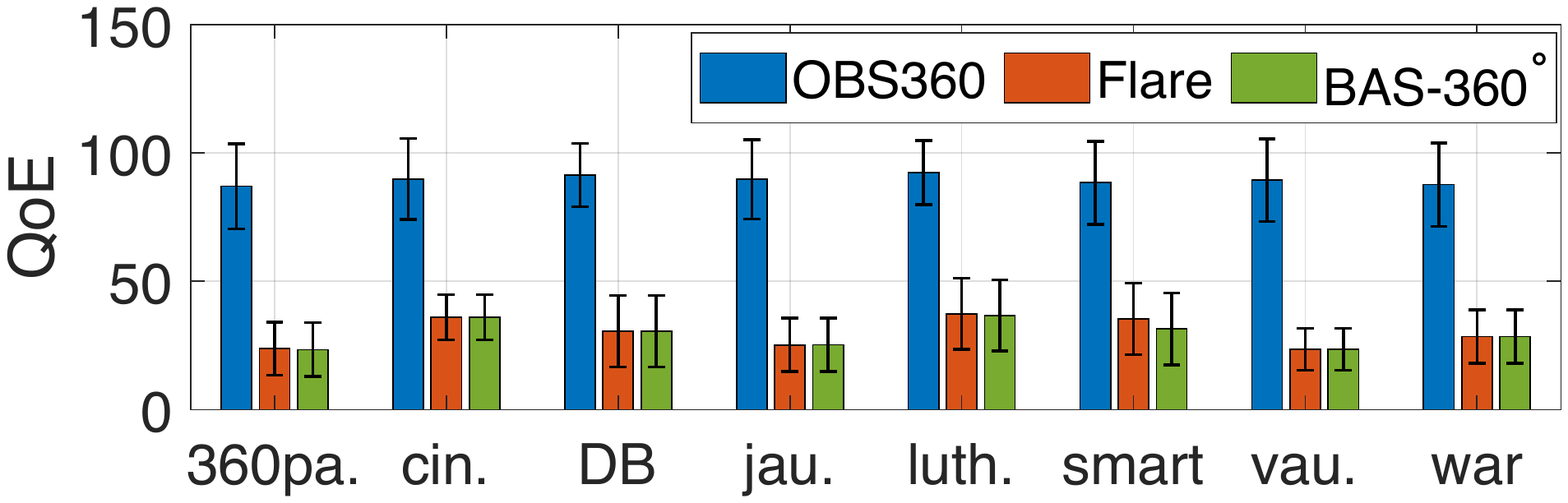}\\
	~~~~(a)\\
	~\includegraphics[height=2.4cm]{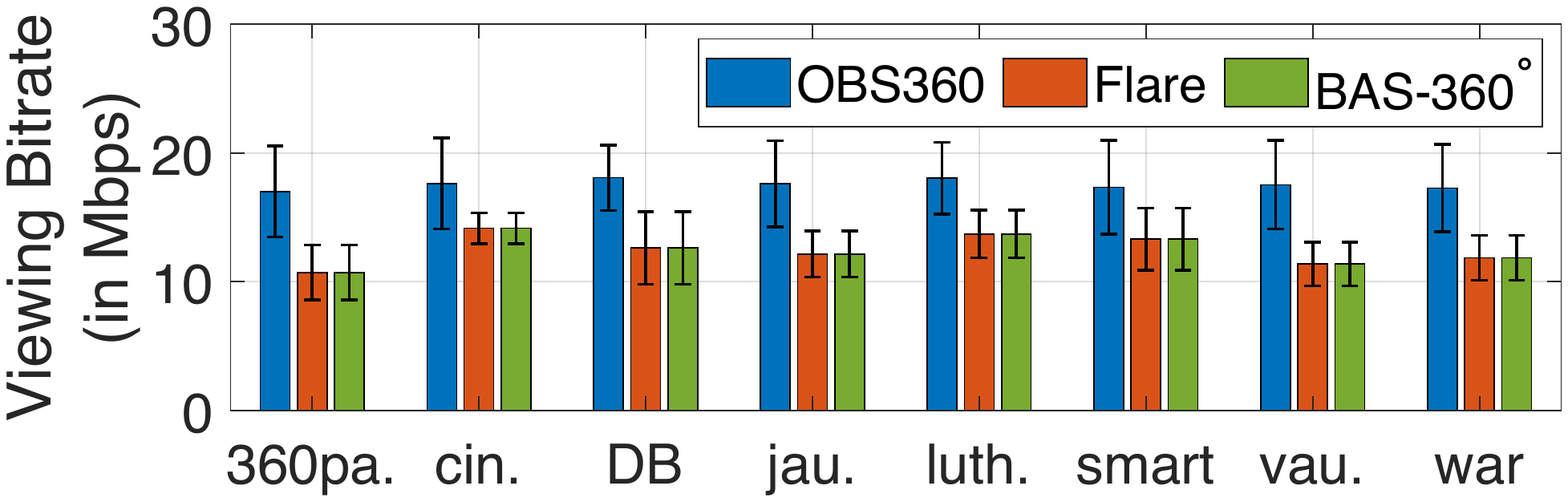}\\
	~~~~(b)\\
	\includegraphics[height=2.55cm]{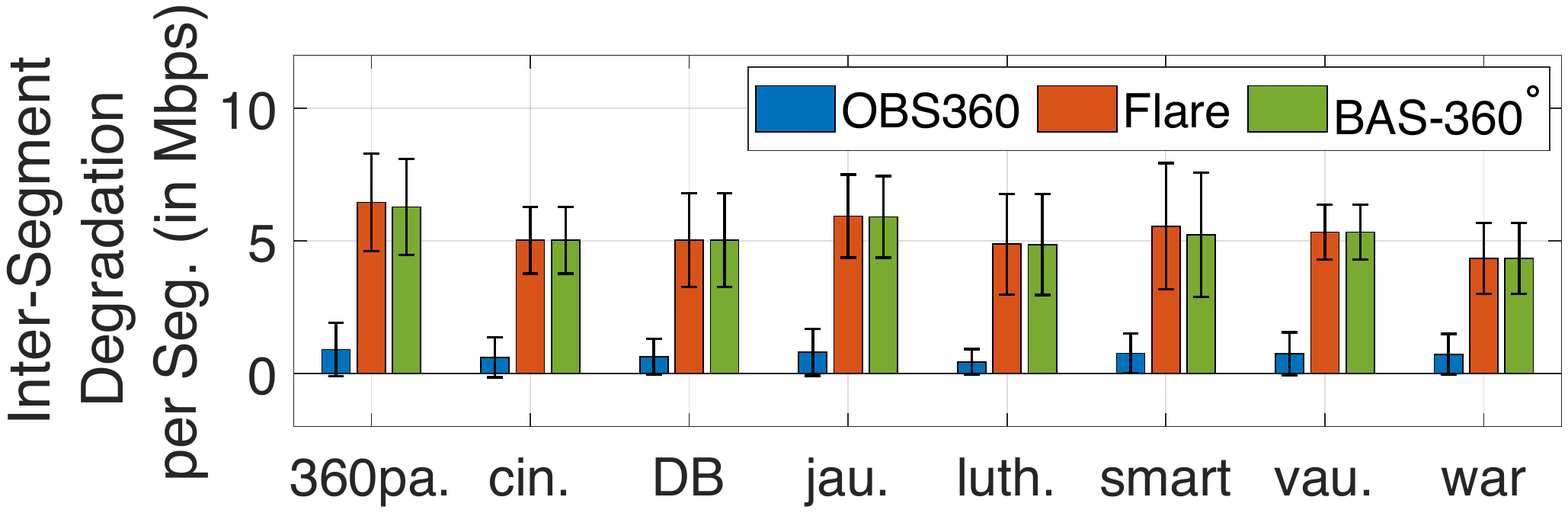}~~\\
	~~~~(c)\\
	\includegraphics[height=2.6cm]{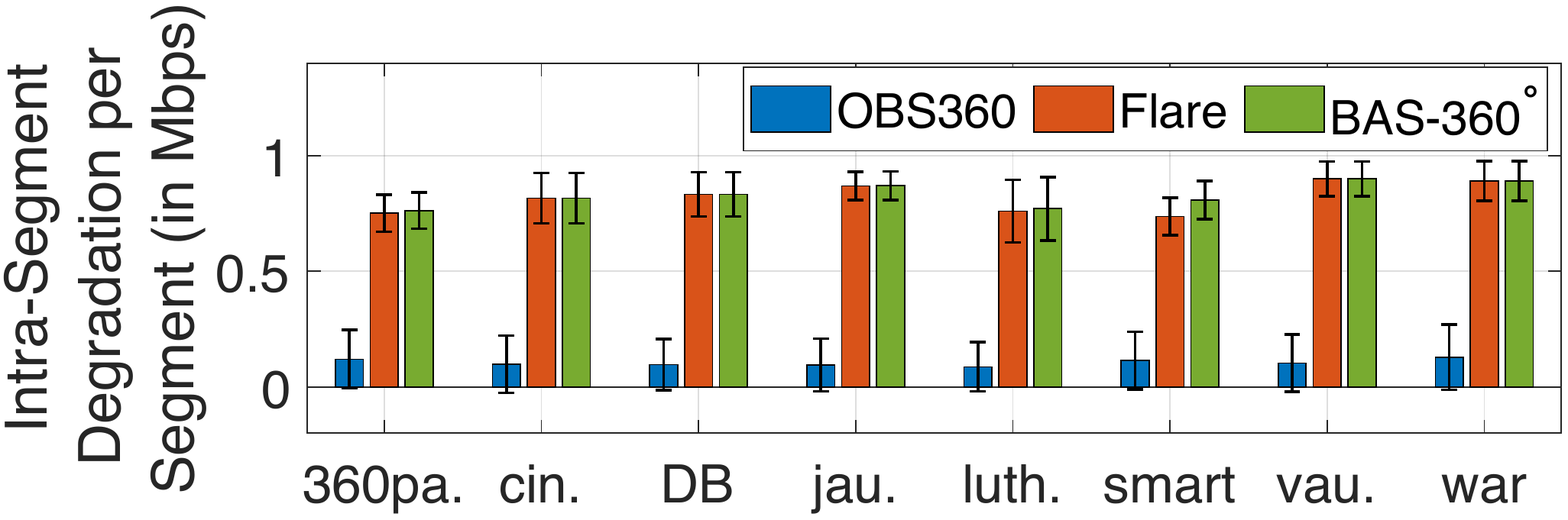}~~~\\
	~~~~(d)\\
	\includegraphics[height=2.45cm]{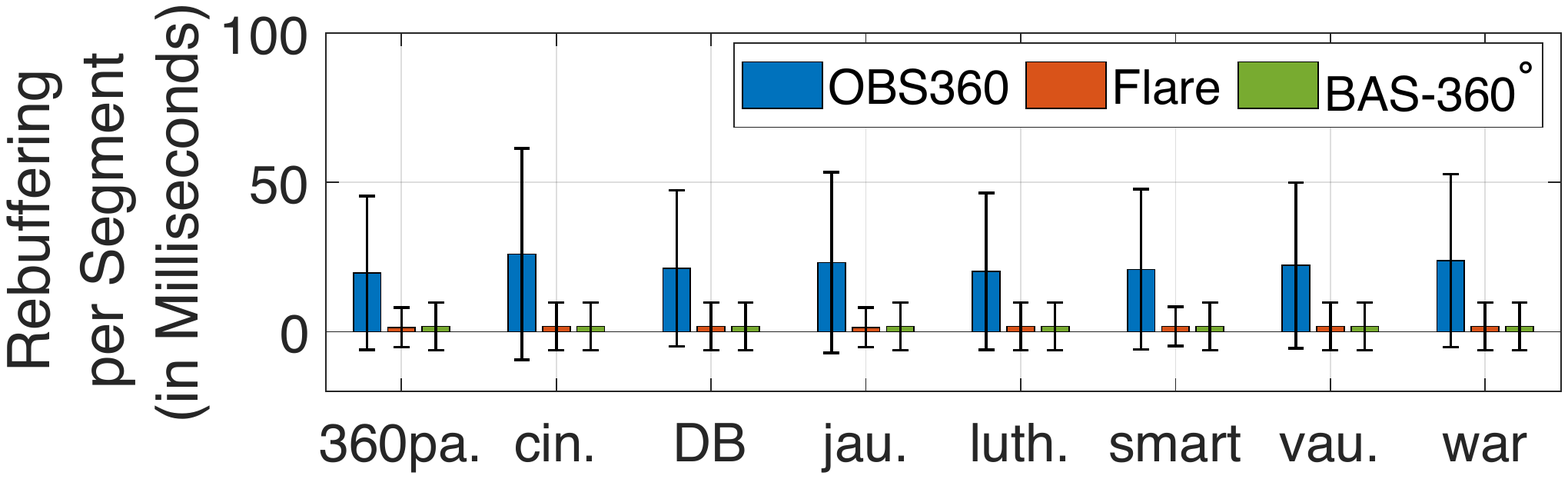}~~~~\\
	~~~~(e)\\
	\caption{{Comparisons between OBS360 and benchmark methods: (a) QoE; (b)  viewing bitrate; (c) inter-segment degradation; (d)  intra-segment degradation; (e) rebuffering.}}\label{fig:comp}
\end{figure}

We perform simulations with the open datasets from  \cite{van2016http} and \cite{Knorr2018data} to compare the performance of our proposed OBS360 algorithm with  the BAS-360$^{\circ}$ algorithm in \cite{Xiao2018-BAS360} and the Flare algorithm in \cite{Qian2018-Flare}. The Flare and BAS-360$^{\circ}$  algorithms are bitrate selection algorithms based on user viewing history. The Flare algorithm aims to maximize the predicted viewing bitrate subject to bandwidth constraints, and BAS-360$^{\circ}$  algorithm aims to minimize the bandwidth waste (i.e., the bandwidth that can be further utilized without inducing rebuffering and the bandwidth used for downloading the unviewed tiles).





In our simulations,  each segment is divided into 16 tiles in a four by four grid, and a user can see one quarter of the whole view at any time, as in \cite{Xiao2018-BAS360,Qian2018-Flare}. The available bitrate set for each tile is  $\mathcal{R} = \{0.25, 0.625, 1.25, 2, 4, 10\}$ (Mbps). In this case, if all the tiles for a segment have the same bitrate from set $\mathcal{R}$, the viewing bitrate of the segment is within set  $\{1, 2.5, 5, 8, 16, 40\}$ (Mbps), which is consistent with the recommended bitrate set in YouTube\cite{youtube-dash}. We perform the simulations based on the traces of all eight videos \cite{Knorr2018data}, and consider 20 users, each corresponding to a downloading capacity trace from the dataset in \cite{van2016http} and a FoV trace from the dataset in \cite{Knorr2018data}. Fig. \ref{fig:comp} shows the performance comparisons between our proposed algorithm and the benchmark methods. 

Fig. \ref{fig:comp} (a) shows the comparison regarding the users' QoE. Specifically, the proposed OBS360 outperforms Flare and BAS-360$^{\circ}$ significantly  for all the videos. In addition, the performances of the benchmark methods highly depend on the average overlap percentages between the users' FoVs and the reference FoV of the videos, while the performance of the proposed algorithm does not. For example, for videos `cin.' and `luth', their overlap percentages are the largest among all videos (in Fig. \ref{fig:video-data}), and their corresponding QoE using the benchmark methods are the largest among all videos (in Fig. \ref{fig:comp} (a)) as well. In comparison, our proposed OBS360 algorithm achieves similar user's QoE under the various videos, regardless of the average overlap percentages of those videos. This shows that the proposed algorithm can learn the users' preferences over the FoV, and adapt the bitrates according to learned preferences.

Fig. \ref{fig:comp} (b) shows that the proposed OBS360 algorithm can improve the viewing bitrate significantly by $24.6\%-58.8\%$ when compared with the benchmark methods. This improvement is mainly due to the capability of the proposed algorithm on learning the users' FoV preferences. Figs. \ref{fig:comp} (c) and (d) show that the proposed OBS algorithm can significantly reduce the inter-segment and intra-segment degradation when compared with the benchmark methods. The inter-segment degradation is reduced by $83.3\%-91.0\%$, and the intra-segment degradation is reduced by $84.1\%-89.1\%$. Fig. \ref{fig:comp} (e) shows the comparison regarding the rebuffering per segment. Although the proposed OBS360 algorithm has larger rebuferring than the Flare and BAS-360$^{\circ}$ algorithms, the absolute values of the rebuffering are quite small (i.e., less than 20 milliseconds per one-second segment), which can hardly be noticed by the users. 

In summary, our proposed algorithm can significantly improve the users' QoE when compared with the benchmark methods. This is achieved  by increasing the user's viewing bitrate and reducing the inter-segment and intra-segment degradation losses.
\section{Conclusion}\label{sec:conclude}
In this work, we considered a scenario with a newly generated 360-degree video without viewing history from other users. We proposed an  OBS360 algorithm to optimize the user's QoE. The proposed online algorithm can adapt to the unknown and heterogeneous users' FoVs and downloading capacities. In addition, the algorithm was proven to have  sublinear dynamic regret under a convex decision set, which provides an intuition that as the number of segments increases, the performance of the online algorithm approaches the offline optimal performance. 
We performed simulations with  real-world datasets regarding the users' FoVs and downloading capacities. The results show that our proposed algorithm achieves $90.1\%$ of the users' QoE of the offline optimal performance. In addition, when compared with Flare and BAS-360$^{\circ}$ algorithms from existing works, our proposed algorithm achieves a higher QoE through improving the viewing bitrate and reducing the inter-segment and intra-segment degradation losses of the users.

The results in this paper can be extended in the following directions. First, the proposed algorithm achieves sublinear dynamics regret under a convex decision set. It is interesting to design an algorithm that can achieve sublinear dynamic regret under a nonconvex decision set. This will be challenging, because the convex decision set is an important condition in online convex optimization for ensuring the sublinearity. Second, it is interesting to design a bitrate selection algorithm by both learning the user's features (i.e., FoV preference and downloading capacity) and exploiting the viewing history of other users.


\bibliographystyle{IEEEtran}
\bibliography{surveyset}

\end{document}